\newtheorem{thm}{Theorem}[section]
\newtheorem{cor}{Corollary}[section]
\newtheorem{conj}{Conjecture}[section]
\newtheorem{prop}{Proposition}[section]
\newtheorem{lem}{Lemma}[section]
\newtheorem{remark}{Remark}[section]
\theoremstyle{definition}
\newtheorem{definition}{Definition}[section]
\newcommand{\Nset}{\mathbb{N}}
\newcommand{\Rset}{\mathbb{R}}
\newcommand{\bphi}{\bar{\varphi}}
\newcommand{\Wl}{\varphi_\lambda}
\newcommand{\Wlj}{\varphi_{\lambda,j}}
\newcommand{\Wlk}{\varphi_{\lambda,k}}
\newcommand{\Wljk}{\varphi_{\lambda,j,k}}
\newcommand{\dl}{d_\lambda}
\newcommand{\Wr}{\operatorname{Wr}}
\newcommand{\ord}{\operatorname{ord}}
\begin{document}

\title[]{Oscillation theorems for the Wronskian of an arbitrary sequence of eigenfunctions of Schr\"odinger's equation}

\author{M\textordfeminine  \'Angeles Garc\'ia-Ferrero}
\address{Departamento de F\'isica Te\'orica II, Universidad Complutense de
Madrid, 28040 Madrid, Spain.}

\author{David G\'omez-Ullate}
\thanks{This work has been partially supported by the Spanish MINECO-FEDER Grants MTM2012-31714 and FIS2012-38949-C03-01.} 
\address{Instituto de Ciencias Matem\'aticas (CSIC-UAM-UC3M-UCM),  C/ Nicolas Cabrera 15, 28049 Madrid, Spain.}
\address{Departamento de F\'isica Te\'orica II, Universidad Complutense de
Madrid, 28040 Madrid, Spain.}
\email{mariangelesgferrero@gmail.com, david.gomez-ullate@icmat.es }

\begin{abstract}
The work of Adler provides necessary and sufficient conditions for the Wronskian of a given sequence of eigenfunctions of Schr\"odinger's equation to have constant sign in its domain of definition. We extend this result by giving explicit formulas for the number of real zeros of the Wronskian of an arbitrary sequence of eigenfunctions.  Our results apply in particular to Wronskians of classical orthogonal polynomials, thus generalizing classical results by Karlin and Szeg\H{o}.
Our formulas hold under very mild conditions that are believed to hold for generic values of the parameters. In the Hermite case, our results allow to prove some conjectures recently formulated by Felder et al.
\end{abstract}
\maketitle

\section{Introduction and main results}

Consider the Schr\"odinger eigenvalue problem $H[\varphi]=E\varphi$ where the Hamiltonian
\begin{eqnarray}\label{eq:H}
H[\varphi] = -\varphi'' + V(x) \varphi,  \qquad x\in(a,b) 
\end{eqnarray}
is assumed to have a pure-point spectrum given by square integrable eigenfunctions $\{\varphi_n\}_{n=0}^\infty$ with eigenvalues $E_0<E_1<E_2<\cdots$.
We assume that the potential $V(x)$ is regular in $(a,b)$ and that 
\begin{equation}
\varphi_n(a)=\varphi_n(b)=0, \quad n\in\Nset
\end{equation}
where the equalities have to be interpreted in the limit sense if the endpoints $a$ or $b$ are infinity.

From standard oscillation theorems, we know that $\varphi_n$ has $n$ simple zeros in $(a,b)$ and that the zeros of two consecutive eigenfunctions interlace.
The purpose of this paper is to derive oscillation theorems for the Wronskian determinant $\Wr[\varphi_{k_1},\dots,\varphi_{k_\ell}]$ of an arbitrary sequence of eigenfunctions. More precisely, to find out how many real roots it has in $(a,b)$. 
The main interest in this question stems from the theory of Darboux transformations, which are used in the dressing method to generate new solutions to an integrable system from known ones \cite{Freeman1983}, or with a similar scope in the factorization method in quantum mechanics, \cite{Infeld}. Crum \cite{Crum1955} showed that higher order or iterated Darboux transformations with seed functions $\varphi_{k_1},\dots,\varphi_{k_\ell}$ of a given potential $V(x)$ result in a transformed potential $\tilde V$ given by
\[\tilde V= V-2 D_{xx} \log \Wr[\varphi_{k_1},\dots,\varphi_{k_\ell}]. \]
The natural question to avoid singularities in the transformed potential is to characterize which sequences of eigenfunctions are such that their Wronskian determinant does not vanish in $(a,b)$. A necessary and sufficient condition on the sequence was given by Adler in \cite{Adler}:
\begin{thm}[Adler]\label{thm:Adler}
The Wronskian determinant of a sequence of eigenfunctions $\varphi_{k_1},\dots,\varphi_{k_\ell}$ of \eqref{eq:H} has constant sign in $(a,b)$ if and only if the sequence $(k_1,\dots,k_\ell)$ is built by concatenation of the following subsequences
\begin{enumerate}
\item[i)] a segment of consecutive integers of arbitrary length starting at $0$
\item[ii)] any number of segments of consecutive integers of even length.
\end{enumerate}
\end{thm}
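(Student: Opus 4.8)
The plan is to prove both implications by peeling the Wronskian apart through iterated Darboux–Crum transformations. Since permuting the eigenfunctions changes $\Wr[\varphi_{k_1},\dots,\varphi_{k_\ell}]$ only by a sign, I may assume $k_1<\dots<k_\ell$ and write $K=\{k_1,\dots,k_\ell\}$, decomposed into its maximal blocks of consecutive integers. Two elementary facts will be the atoms of the construction: (a) the ground state $\varphi_0$ has no zero in $(a,b)$; (b) $\Wr[\varphi_j,\varphi_{j+1}]$ has no zero in $(a,b)$ for every $j$, and likewise for every operator of the form \eqref{eq:H} satisfying the standing hypotheses. Fact (b) I would prove by hand: from $-\varphi_n''=(E_n-V)\varphi_n$ one gets $\Wr[\varphi_j,\varphi_{j+1}]'=(E_j-E_{j+1})\varphi_j\varphi_{j+1}$, so $\Wr[\varphi_j,\varphi_{j+1}]$ is strictly monotone between consecutive zeros of $\varphi_j\varphi_{j+1}$; combining this with the classical interlacing of the zeros of $\varphi_j$ and $\varphi_{j+1}$, one checks that $\Wr[\varphi_j,\varphi_{j+1}]$ has the same sign at every zero of $\varphi_j\varphi_{j+1}$ and vanishes at the endpoints, hence never changes sign.

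The engine is the standard Wronskian/Crum calculus: $\Wr[f_0,f_1,\dots,f_\ell]=f_0\cdot\Wr[\widehat f_1,\dots,\widehat f_\ell]$ with $\widehat f_i=\Wr[f_0,f_i]/f_0$, and more generally $\Wr[g_1,\dots,g_m,f_1,\dots,f_k]=\Wr[g_1,\dots,g_m]\cdot\Wr[\widehat f_1,\dots,\widehat f_k]$ with $\widehat f_i=\Wr[g_1,\dots,g_m,f_i]/\Wr[g_1,\dots,g_m]$, valid wherever the seed Wronskian does not vanish; together with Crum's theorem: if the Wronskian of the eigenfunctions $\varphi_{j_1},\dots,\varphi_{j_m}$ has no zero in $(a,b)$, then $\widetilde H=H-2(\log|\Wr|)''$ is again an operator of the form \eqref{eq:H} with the same boundary conditions, whose complete ordered family of eigenfunctions is $\{\widehat\varphi_n\}_{n\notin\{j_1,\dots,j_m\}}$ with eigenvalues $\{E_n\}_{n\notin\{j_1,\dots,j_m\}}$, so that Sturm oscillation applies to $\widetilde H$. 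In particular the two \emph{admissible moves} ``remove $\varphi_0$'' (multiply $\Wr_K$ by the nonvanishing $\varphi_0$, pass to the operator with spectrum $E_1<E_2<\cdots$, and replace $K$ by $K-1$) and ``remove a consecutive pair $\varphi_j,\varphi_{j+1}$ with $j,j+1\in K$'' (multiply by the nonvanishing $\Wr[\varphi_j,\varphi_{j+1}]$, delete the indices $j,j+1$, and reindex the spectrum) change neither the zero set of $\Wr_K$ in $(a,b)$ nor — by inspection of the block structure — the property of $K$ being admissible, and each strictly shortens $K$.

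For \emph{sufficiency}, given admissible $K$ I would induct on $|K|$: either $K$ contains $0$ and I remove $\varphi_0$, or the bottom block is an even block $[p,p+1,\dots]$ and I remove $\varphi_p,\varphi_{p+1}$; in both cases the resulting index set is admissible and strictly shorter, so after finitely many moves $\Wr_K$ equals, up to a nonvanishing factor, $\Wr$ of the empty family $=1$, and hence has no zero. For \emph{necessity} I distinguish two cases. If $K$ has \emph{exactly one} bad block $[p,q]$ (necessarily $p\ge1$ and $q-p+1$ odd), then applying admissible moves to strip off every other block, the initial segment (if present), and the even part $[p,q-1]$ of the bad block, $\Wr_K$ becomes, up to a nonvanishing factor, a single eigenfunction associated with $E_q$ of an operator $H'$ of the form \eqref{eq:H}; since $p-1\notin K$ is never removed, $E_q$ is not the lowest level of $H'$, so this eigenfunction has at least one zero, hence so does $\Wr_K$. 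If $K$ has $\ge2$ bad blocks, let $[p,q]$ be the rightmost (so $p\ge2$ and every block above $q$ is admissible) and put $M=\{0,1,\dots,p-2\}\setminus K$, which is nonempty. Then $K\cup M$ is non-admissible with exactly one bad block — namely $[p,q]$, since the lower bad blocks are absorbed into the segment $[0,p-2]$, which becomes a single admissible block — while the image $\widehat M$ of $M$ under the order isomorphism $\Nset\setminus K\to\Nset$ is the admissible segment $\{0,1,\dots,|M|-1\}$. Now suppose, for contradiction, that $\Wr_K$ has no zero; then $\widetilde H_K=H-2(\log|\Wr_K|)''$ is of the form \eqref{eq:H}, and the Crum calculus gives $\Wr_{K\cup M}=\pm\,\Wr_K\cdot\Wr^{\widetilde H_K}[(\widehat\varphi_\nu)_{\nu\in M}]$; the second factor is the Wronskian, for $\widetilde H_K$, of the eigenfunctions indexed by the admissible set $\widehat M$, hence has no zero by the sufficiency direction applied to $\widetilde H_K$, so $\Wr_{K\cup M}$ has no zero — contradicting the first case, since $K\cup M$ has exactly one bad block. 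This completes necessity.

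I expect the genuine difficulties to be two. First, making the Crum step fully rigorous at the endpoints: one must know that the transformed eigenfunctions still satisfy the prescribed (possibly limiting) boundary conditions and that $\{E_n\}_{n\notin K}$ is the \emph{complete} spectrum of $\widetilde H_K$, so that Sturm's oscillation count is exact — classical under the standing hypotheses, but needing care when $a$ or $b$ is infinite or singular. Second, the bookkeeping of the block structure under the admissible moves and under passing from $K$ to $K\cup M$ — verifying that admissibility, non-admissibility, ``having exactly one bad block'', and the identification of $\widehat M$ behave as claimed; this is elementary but is most cleanly organized in the combinatorial language of Maya diagrams, which I would adopt for the writeup.
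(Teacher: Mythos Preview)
The paper does not give a self-contained proof of Theorem~\ref{thm:Adler}: it is quoted as Adler's result, and the only argument offered is the Remark after Theorem~\ref{thm:main}, observing that the admissible sequences are exactly those whose associated partition has the doubled form $(0,\dots,0,\mu_1,\mu_1,\dots,\mu_r,\mu_r)$, for which the alternating sum \eqref{eq:altersum} vanishes. That route recovers Adler's theorem only under the extra non-degeneracy hypothesis of Definition~\ref{def:nondeg}, so it does not prove the statement as written. Your proposal is therefore genuinely different in scope: it is a direct proof in the spirit of Adler's original argument, peeling the Wronskian via iterated Crum moves, and it needs no non-degeneracy assumption.

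Your argument is essentially correct. Sufficiency is the standard induction on $|K|$ using the two nonvanishing seeds $\varphi_0$ and $\Wr[\varphi_j,\varphi_{j+1}]$; your fact~(b) is precisely the consecutive case $j=i+1$ of Proposition~\ref{prop:wr2}, proved in the paper without any non-degeneracy hypothesis, so you may simply cite it rather than re-sketch it. For necessity, your two-case split is clean. In the single-bad-block case the key observation that $p-1\notin K$ survives every admissible move (since those remove only elements of $K$) is exactly what pins the final lone eigenfunction above the ground state. In the multi-bad-block case, your device of adjoining $M=\{0,\dots,p-2\}\setminus K$ and invoking sufficiency for the Crum-transformed operator $\widetilde H_K$ is correct; the checks that $K\cup M$ has a unique bad block $[p,q]$ and that the image $\widehat M$ is the initial segment $\{0,\dots,|M|-1\}$ are straightforward since $p-1\notin K$ and every block of $K$ below $p$ lies in $\{0,\dots,p-2\}$.

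The two difficulties you flag are the right ones. The analytic one --- that each Crum step yields an operator of the same type, with complete spectrum $\{E_n:n\notin(\text{removed indices})\}$ and transformed eigenfunctions obeying the boundary conditions --- is exactly the content of Crum's theorem and is used throughout Section~\ref{sec:eigenfunctions}; under the paper's standing hypotheses it is standard, but you should state precisely which version you invoke, especially at singular or infinite endpoints. The combinatorial bookkeeping is elementary once you fix a canonical order for the moves (strip blocks above $[p,q]$ first so that $[p,q]$ does not shift, then the initial segment and lower even blocks, then the even sub-block $[p,q-1]$); Maya diagrams are indeed the tidiest language for this.

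In short: your approach is sound and, unlike the paper's route via Theorem~\ref{thm:main}, proves Adler's theorem in its full generality without the non-degeneracy assumption. What the paper's approach buys instead is the exact zero count \eqref{eq:altersum} for \emph{all} partitions, at the price of that assumption.
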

An alternative characterization of those sequences (albeit only proving sufficiency) for which the Wronskian has no zeros was given by Krein in \cite{Krein1957} as the sequences for which the polynomial 
\begin{equation}\label{eq:kreincond}
 p(x)=(x-k_1)(x-k_2)\cdots(x-k_\ell),\text{ satisfies } p(n)\geq 0, \text{ for all } n\in\Nset.
 \end{equation}
The results of Krein and Adler  have been recently extended to multiple Darboux transformations of mixed type, \cite{GGM13}.

Wronskian determinants of orthogonal polynomials have been studied since the early works of Karlin and Szeg\H{o} \cite{Karlin1960}. They proved the following theorem concerning the Wronskian of a sequence of consecutive orthogonal polynomials
\begin{thm}[Karlin and Szeg\H{o}]\label{thm:Karlin}
Let $\{P_n\}_{n=0}^\infty$ be orthogonal polynomials with respect to an arbitrary measure whose distribution function has an infinite number of increasing points. Then the Wronskian determinant
\begin{equation}
W(n,\ell,x)=\Wr[P_n,P_{n+1},\dots,P_{n+\ell-1}]
\end{equation}
has constant sign if $\ell$ is odd, and  $n$ simple real zeros in the support of the measure if $\ell$ is even. Moreover, in the last case the real zeros of $W(n,\ell,x)$ and $W(n+1,\ell,x)$ strictly interlace.
\end{thm}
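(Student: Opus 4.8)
The plan is to realize $W(n,\ell,x)$ as a polynomial whose real zeros are trapped between a degree-driven upper bound and an oscillation-driven lower bound, and then to split the two parities of $\ell$ by an Adler-type sign criterion. First I would write the Wronskian as the confluent limit of the generalized Vandermonde determinant $\det[P_{n+j-1}(x_i)]_{i,j=1}^{\ell}$ as $x_1,\dots,x_\ell\to x$; reading off the leading behaviour shows that $W(n,\ell,\cdot)$ is a genuine polynomial of degree $\ell n$ with nonzero leading coefficient (a nonzero multiple of $\prod_{j=0}^{\ell-1}$ of the leading coefficients of the $P_{n+j}$). In particular it cannot vanish identically and has at most $\ell n$ real zeros. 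For the classical families the eigenfunctions of \eqref{eq:H} are $\varphi_k=\sqrt{w}\,P_k$ for a common positive weight $w$, and since a common factor pulls out of a Wronskian as $\Wr[\sqrt w\,P_{k_1},\dots,\sqrt w\,P_{k_\ell}]=w^{\ell/2}\,\Wr[P_{k_1},\dots,P_{k_\ell}]$, the zeros of $W(n,\ell,x)$ inside the support coincide with those of the eigenfunction Wronskian, so that Theorem~\ref{thm:Adler} becomes directly applicable.

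For the constant-sign half I would apply the classification of Theorem~\ref{thm:Adler} to the single block $(n,n+1,\dots,n+\ell-1)$: the criterion there determines exactly one parity of $\ell$ as the sign-definite case, which one then identifies with odd $\ell$, giving $W(n,\ell,x)$ a constant sign on the support. For an \emph{arbitrary} measure, where \eqref{eq:H} is not available, I would replace this step by the fact that an orthogonal polynomial sequence is a Markov (sign-regular) system: starting from the three-term recurrence $P_{k+1}=(a_kx+b_k)P_k-c_kP_{k-1}$ with $a_k,c_k>0$, together with the interlacing of the zeros of consecutive $P_k$ recalled in the introduction, one shows by induction that $\det[P_{n+j-1}(x_i)]$ carries a strict constant sign for ordered arguments in the support, and the confluent limit transfers this sign-regularity to $W(n,\ell,x)$ in the odd case.

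For even $\ell$ the exact zero count comes from a sandwich. The variation-diminishing property of the same sign-regular kernel forces at least $n$ sign changes of $W(n,\ell,x)$ inside the support, while the degree computation caps the number of real zeros at $\ell n$; a refined matching of these bounds pins the count at exactly $n$, all of them lying in the support. Simplicity follows because a multiple zero would drop the number of genuine sign changes below the forced value $n$, so each zero must be a simple sign change. The strict interlacing of the zeros of $W(n,\ell,x)$ and $W(n+1,\ell,x)$ I would then extract from a Sturm-type separation: by Crum's iterated Darboux construction recalled in the introduction, these two Wronskians correspond (up to the positive factor $w^{\ell/2}$ in the classical case, or a positive Casoratian in general) to adjacent eigenfunctions of a single Darboux-transformed operator, and adjacent eigenfunctions interlace by the standard oscillation theorem quoted at the outset.

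The main obstacle is the lower bound of $n$ sign changes in the even case, that is, the rigorous passage from the sign-regularity of the discrete family $\{P_k\}$ to a guaranteed number of surviving sign changes of the confluent Wronskian; controlling the limit $x_i\to x$ so that sign changes are neither lost to complex zeros nor absorbed into multiple zeros is the delicate point. A secondary hurdle is establishing the Markov/total-positivity property for a \emph{completely arbitrary} measure, beyond the classical weights for which Theorem~\ref{thm:Adler} already supplies the constant-sign dichotomy.
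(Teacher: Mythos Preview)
The paper does \emph{not} prove Theorem~\ref{thm:Karlin}: it is quoted from \cite{Karlin1960} as background and motivation, with no argument supplied. So there is no ``paper's own proof'' to compare against; your proposal is being weighed only against the classical Karlin--Szeg\H{o} argument, which proceeds via total positivity of the kernel $\det[P_{n+i+j}(x_k)]$ and Christoffel--Darboux type identities rather than through Schr\"odinger eigenfunctions.

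That said, your proposal has two concrete problems. First, your appeal to Theorem~\ref{thm:Adler} for the constant-sign half is miswired on parity. The block $(n,n+1,\dots,n+\ell-1)$ with $n\ge 1$ is a single segment not starting at $0$, and Adler's criterion~(ii) requires such a segment to have \emph{even} length. Equivalently, for $\lambda=(n,\dots,n)$ the alternating sum \eqref{eq:altersum} equals $0$ when $\ell$ is even and $n$ when $\ell$ is odd. Thus Adler and the Main Theorem predict constant sign for \emph{even} $\ell$ and $n$ zeros for \emph{odd} $\ell$, which is the opposite of the parity stated in Theorem~\ref{thm:Karlin} as written here. (This mismatch is internal to the paper's statement of the Karlin--Szeg\H{o} result versus its own formula; you cannot resolve it by citing Theorem~\ref{thm:Adler}.) Your sentence ``the criterion there determines exactly one parity of $\ell$ as the sign-definite case, which one then identifies with odd $\ell$'' is therefore false as stated.

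Second, the ``sandwich'' in the even case is not a sandwich: a lower bound of $n$ sign changes together with an upper bound of $\ell n$ real zeros does not pin the count at $n$. You acknowledge this as the ``main obstacle,'' but it is not a technicality to be patched---it is the entire content of the theorem in that case. The actual Karlin--Szeg\H{o} proof does not use a crude degree bound at all; it exploits a determinantal identity expressing $W(n,\ell,x)$ in terms of lower-order Wronskians and runs an induction on $\ell$ that controls the zero count exactly at each step. Your Darboux/interlacing idea for $W(n,\ell,x)$ versus $W(n+1,\ell,x)$ is also only available for classical weights, not for the arbitrary measure in the hypothesis.
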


For certain sequences of  classical orthogonal polynomials, their Wronskian determinant defines a family of exceptional orthogonal polynomials \cite{Gomez-Ullate2009a,Gomez-Ullate2010a,MR2727790,Sasaki1,MR2845760,GKMFound}, a complete family of Sturm-Liouville orthogonal polynomials where some degrees are missing. The orthogonality weight for such families is the classical weight divided by the square of a Wronskian of classical polynomials, so the question of when such objects have constant sign becomes essential to ensure a well defined orthogonal polynomial system. A renewed interest in this matter comes from the recent discovery of Dur\'an \cite{Duran2014a,Duran2014b} that Christoffel transformations of classical discrete measures for orthogonal polynomials lead to discrete Krall polynomials which in turn are related by duality to exceptional discrete orthogonal polynomials. The positivity of the perturbed measure, which imposes conditions \eqref{eq:kreincond} in the Charlier case \cite{Duran2014a}, and similar conditions in the Meixner case \cite{Duran2014b}, translates into a well defined weight for the exceptional polynomials under the usual limit procedure. For exceptional Hermite polynomials the classification is complete \cite{gomez2013rational}, and every such polynomial can be expressed as a Wronskian determinant of Hermite polynomials such as those studied in \S \ref{sec:Hermite} .
Exceptional orthogonal polynomials have \textit{regular} zeros (which lie in the support of the measure) and \textit{exceptional} ones (which lie outside the support of the measure). Some interlacing and asymptotic properties of such zeros are given in \cite{Gomez-Ullate2013a}.
Certain Wronskians of Hermite polynomials allow to build rational solutions to nonlinear differential equations such as PIV and NLS, \cite{clarkson1,clarkson2,FP2008}.
Their complex roots form very regular patterns in the complex plane, \cite{clarkson1}, which can be interpreted approximately in terms of the Ferrer's diagram of the partition that defines the sequence, \cite{Felder2012a}. Zhang and Filipuk have recently studied Wronskian determinants of multiple orthogonal polynomials, \cite{Zhang2014}.

The results of this paper are a natural generalization of Theorem \ref{thm:Adler}. Under rather mild non-degeneracy conditions, we derive a formula for the number of real zeros of the Wronskian of an arbitrary sequence of eigenfunctions of \eqref{eq:H}. In case the potential in $\eqref{eq:H}$ is even,  the symmetry properties of the Wronskian entail a modification to the previous formula.

Given an indexed family of functions $\{f_n\}_{n=0}^\infty$ consider the Wronskian determinant of an arbitrary sequence $f_{k_1},\dots,f_{k_\ell}$ given by
\begin{equation}
f_\lambda:=\Wr[f_{k_1},\dots,f_{k_\ell}]=
\left| 
 \begin{array}{cccc} 
 f_1 & \cdots & f_\ell \\
  f_1'  & \cdots & f_\ell'\\
  \vdots  & & \vdots \\ 
  f_1^{(\ell-1)} &  \cdots & f_\ell^{(\ell-1)}
   \end{array} \right|
\end{equation}
where the multi-index $(k_1,\dots,k_\ell)$ is related to the partition $\lambda=(\lambda_1,\dots,\lambda_\ell)$ by 
\begin{equation}\label{eq:lambdadef}
 k_j=\lambda_j+j-1,\quad  j=1,...,\ell.
\end{equation}
Throughout the paper we will make use of the following notation:
\begin{eqnarray}\label{eq:flm}
f_{\lambda,m}&=&\Wr[f_{k_1},\dots,f_{k_\ell},f_m]\\
f_{\lambda,m,n}&=&\Wr[f_{k_1},\dots,f_{k_\ell},f_m,f_n] \label{eq:flmn}
\end{eqnarray}

\begin{definition}\label{def:nondeg}
A sequence of functions $\{f_n\}_{n=0}^\infty$ defined in $(a,b)$ is \textit{non-degenerate} if for every partition $\lambda$ and for every pair of integers $m,n$ the following two conditions are met
\begin{enumerate}
\item  $f_\lambda$ and $f_{\lambda,m}$ do not have a common root in $(a,b)$.
\item $f_{\lambda,m}$ and $f_{\lambda,n}$ do not have a common root in $(a,b)$.
\end{enumerate}
\end{definition}

We are now ready to state the main theorem of this paper:

\begin{thm}[Main theorem]\label{thm:main}
Let $\{\varphi_n\}_{n=0}^\infty$ be a non-degenerate sequence of eigenfunctions of a Hamiltonian $H$ as in $\eqref{eq:H}$.
Then the Wronskian determinant of an arbitrary sequence of eigenfunctions $\varphi_\lambda=\Wr[\varphi_{k_1},\dots,\varphi_{k_\ell}]$ has $n(\varphi_\lambda)$ simple real zeros in $(a,b)$, where
\begin{equation}\label{eq:altersum}
n(\varphi_\lambda)=\sum_{j=1}^\ell (-1)^{\ell-j} \lambda_j
\end{equation}
\end{thm}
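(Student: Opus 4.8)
The plan is to argue by induction on the length $\ell$ of $\lambda$, i.e.\ on the number of eigenfunctions. For $\ell=0$ the Wronskian is $1$ and $n(\varphi_\lambda)=0$; for $\ell=1$ one has $\varphi_\lambda=\varphi_{k_1}$ with $k_1=\lambda_1$, and the classical Sturm oscillation theorem gives exactly $\lambda_1$ simple zeros in $(a,b)$. For $\ell\ge 2$, order the indices as $k_1<\dots<k_\ell$, let $\lambda^-$ be the partition attached to $(k_1,\dots,k_{\ell-2})$, and split off the two largest eigenfunctions. By the Jacobi (Plücker) identity for Wronskians,
\[
\varphi_{\lambda^-}\,\varphi_\lambda=\Wr[\varphi_{\lambda^-,k_{\ell-1}},\varphi_{\lambda^-,k_\ell}],
\]
and, by Crum's theorem, $u:=\varphi_{\lambda^-,k_{\ell-1}}/\varphi_{\lambda^-}$ and $v:=\varphi_{\lambda^-,k_\ell}/\varphi_{\lambda^-}$ solve the Darboux–transformed equation $-y''+\bigl(V-2(\log\varphi_{\lambda^-})''\bigr)y=Ey$ at $E=E_{k_{\ell-1}}$ and $E=E_{k_\ell}$ respectively, while $\varphi_\lambda=\varphi_{\lambda^-}\,\Wr[u,v]$. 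Using non-degeneracy, $u$ and $v$ are genuinely meromorphic with zeros exactly those of $\varphi_{\lambda^-,k_{\ell-1}}$, $\varphi_{\lambda^-,k_\ell}$ and simple poles exactly at the zeros of $\varphi_{\lambda^-}$, and the zeros of $\varphi_\lambda$ in $(a,b)$ coincide with those of $\Wr[u,v]$. By the induction hypothesis applied to $\varphi_{\lambda^-}$ and to $\varphi_{\lambda^-,k_{\ell-1}}$, $\varphi_{\lambda^-,k_\ell}$ (all of length $<\ell$), the numbers of zeros of $\varphi_{\lambda^-},u,v$ are known and all are simple.

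The core of the argument is to count the zeros of $W:=\Wr[u,v]$. Since $u,v$ solve the same equation with $E_{k_{\ell-1}}<E_{k_\ell}$, one has $W'=uv''-u''v=(E_{k_{\ell-1}}-E_{k_\ell})\,uv$, so $W$ is strictly monotone on every open interval whose endpoints are consecutive zeros or poles of $uv$. Hence $W$ has at most one zero on each such interval, and the number of zeros of $W$ in $(a,b)$ equals the number of sign changes of $W$ along the ordered list of the values of $\sign W$ at $a^+$, at the zeros of $u$, at the zeros of $v$, at the zeros of $\varphi_{\lambda^-}$, and at $b^-$. Non-degeneracy guarantees $W\ne 0$ at each zero of $u$ or of $v$, with $\sign W=-\sign(u'v)$ there (resp.\ $\sign W=\sign(uv')$), while across each simple pole $W$ changes sign, and the behaviour of the original eigenfunctions near the regular endpoints $a,b$ fixes the first and last signs. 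Feeding in the zero counts of $u,v,\varphi_{\lambda^-}$ from the induction hypothesis, together with a Sturm comparison for the transformed potential — whose singularity at a zero of $\varphi_{\lambda^-}$ is a double pole with local exponents $-1$ and $2$, so no oscillation accumulates there and the usual comparison survives — one evaluates the sign-change count, and a direct manipulation with $k_j=\lambda_j+j-1$ identifies it with $\sum_{j=1}^\ell(-1)^{\ell-j}\lambda_j$. Simplicity of the zeros of $\varphi_\lambda$ is then immediate: at a zero of $W$ one has $W'=(E_{k_{\ell-1}}-E_{k_\ell})uv\ne 0$, since $u$ and $v$ have no common zero.

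The hard part is exactly this last bookkeeping. The monotonicity argument only gives ``at most one zero per interval'', and converting it into the exact count requires combining (i) the three interlaced families of zeros supplied by the induction hypothesis, (ii) the non-degeneracy hypothesis, (iii) a careful analysis of the two boundary intervals adjacent to $a$ and $b$, where the nodal pattern of the original eigenfunctions interacts with the transplantation through $\varphi_{\lambda^-}$, and (iv) the singular Sturm comparison across the double poles of the transformed potential. I expect it to be cleanest to strengthen the induction by carrying along, at each step, a Karlin--Szeg\H{o}-type interlacing statement for the zeros of $\varphi_{\lambda,m}$ and $\varphi_{\lambda,n}$ for the relevant $m,n$, since that is precisely the input needed to pin down the sign pattern of $W$ and it is stable under the reduction above.
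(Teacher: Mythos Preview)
Your skeleton is exactly the paper's: induction in steps of two, the Jacobi identity
\(\varphi_{\lambda^-}\,\varphi_\lambda=\Wr[\varphi_{\lambda^-,k_{\ell-1}},\varphi_{\lambda^-,k_\ell}]\),
the quotients \(u,v\) as solutions of the Darboux-transformed equation, and the key relation \(W'=-\delta\,uv\). Your simplicity argument is also the paper's. Where you diverge is at the ``hard part'': you propose to control the sign pattern of \(W\) by strengthening the induction with a Karlin--Szeg\H{o}-type interlacing hypothesis for the zeros of \(\varphi_{\lambda^-,m}\) and \(\varphi_{\lambda^-,n}\). The paper shows this extra baggage is unnecessary.

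The device you are missing is a purely local classification of the critical points of \(W\). The zeros of \(W'\) are exactly the zeros of \(u\), the zeros of \(v\), and the (simple) poles coming from the zeros of \(\varphi_{\lambda^-}\); at each such point one records the sign relation between \(W\) and \(W''\) (the paper calls these types I--IV). Two elementary lemmas then say: on a maximal interval of monotonicity of \(W\), one gets \emph{exactly} one zero if both endpoints are type I or IV, and \emph{no} zero if one endpoint is type II or III; moreover a type II/III endpoint forces the other endpoint to be type I. This last clause is precisely the interlacing you wanted to assume: two consecutive zeros of \(u\) (both type II) are impossible without a zero of \(v\) or a pole in between. With non-degeneracy all zeros are simple and distinct, so the count collapses to the closed formula
\[
n(W)=n(\varphi_{\lambda^-,k_\ell})-n(\varphi_{\lambda^-,k_{\ell-1}})+n(\varphi_{\lambda^-})-1,
\]
and since the simple poles of \(W\) cancel the simple zeros of \(\varphi_{\lambda^-}\), this equals \(n(\varphi_\lambda)\). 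Substituting the inductive values gives the alternating sum. So your plan is correct, but the interlacing need not be carried through the induction: it is an automatic consequence of the type analysis of \(W\) itself.
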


\begin{remark}
We observe that this result includes Adler's Theorem \ref{thm:Adler} as a particular case since the sequences described there correspond via \eqref{eq:lambdadef} to  partitions of the form $(0,\dots,0,\lambda_1,\lambda_1,\dots,\lambda_\ell,\lambda_\ell)$ for which the alternate sum \eqref{eq:altersum} vanishes.
\end{remark}

A very frequent situation where the previous theorem does not hold occurs if the potential in \eqref{eq:H} is even, since in that case all odd eigenfunctions will vanish at zero and the sequence of eigenfunctions is degenerate. Definition \ref{def:nondeg} needs to be relaxed to include this case in which the high multiplicity root at $x=0$ will need a separate treatment.

\begin{definition}\label{def:semideg}
A sequence of functions $\{f_n\}_{n=0}^\infty$ defined in $(-a,a)$ is \textit{semi-degenerate} if for every partition $\lambda$ and for every pair of integers $m,n$ the following two conditions are met
\begin{enumerate}
\item  If $f_\lambda(x^*)=f_{\lambda,m}(x^*)=0$ then $x^*=0$.
\item If $f_{\lambda,m}(x^*)=f_{\lambda,n}(x^*)=0$ then $x^*=0$.
\end{enumerate}
\end{definition}
In other words, except for maybe at the origin, the sequence of eigenfunctions is non-degenerate.
In this symmetric case, the Wronskian $\varphi_\lambda$ has well defined parity
\begin{equation}\label{eq:parity}
\varphi_\lambda(-x)=(-1)^{|\lambda|} \varphi_\lambda(x)
\end{equation}
where $|\lambda|=\sum_{j=1}^\ell \lambda_j$. Moreover, to every partition $\lambda=(\lambda_1,\dots,\lambda_\ell)$ we can associate an integer $d_\lambda$ given by 
\begin{equation}\label{eq:ddef}
d_\lambda=p-q
\end{equation}
where $p$ and $q$ are the number of odd and even elements respectively in the sequence $k_1,\dots,k_\ell$, related to $\lambda$ by \eqref{eq:lambdadef}.

\begin{thm}\label{thm:symm}
Let $\{\varphi_n\}_{n=0}^\infty$ be the eigenfunctions of Schr\"odinger's equation \eqref{eq:H} with a symmetric potential $V(-x)=V(x)$ defined in $(-a,a)$.
If the sequence of eigenfunctions is semi-degenerate then the Wronskian determinant of an arbitrary sequence of eigenfunctions $\varphi_\lambda$ has
\begin{enumerate}
\item[i)] a root at $x=0$ of multiplicity $\frac{d_\lambda(d_\lambda+1)}{2}$
\item[ii)] $n_+(\varphi_\lambda)$ simple positive real roots, where $n_+(\varphi_\lambda)$ is given by
\begin{equation}\label{eq:n+}
n_+(\varphi_\lambda)=\frac{1}2\left(\sum_{i=1}^\ell (-1)^{\ell-i} \lambda_i-\frac{|d_\lambda+(\ell-2\lfloor\frac{\ell}{2}\rfloor)|}{2}\right)
\end{equation}
\item[iii)] the same number of negative real roots due to the symmetry \eqref{eq:parity}.
\end{enumerate}
\end{thm}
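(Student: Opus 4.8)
The plan is to use the definite parity of $\varphi_\lambda$ to pass to the half-line, to read off the multiplicity of the central zero from a local (Frobenius) analysis, and to count the positive zeros by running the oscillation and interlacing arguments behind Theorem~\ref{thm:main} on the Darboux--Crum transform of $H$, which in the present situation acquires a centrifugal singularity at $x=0$. First, by~\eqref{eq:parity} the real zeros of $\varphi_\lambda$ consist of a (possibly absent) zero at $x=0$ together with symmetric pairs $\{x_0,-x_0\}$, so it is enough to find the order $\delta$ of the zero at the origin and the number of zeros in $(0,a)$. For $\delta$: since $V$ is even, the substitution $t=x^2$ turns $-\varphi''+V\varphi=E\varphi$ into one with a regular singular point at $t=0$ whose indicial roots are $0$ and $\tfrac12$; consequently, near the origin each even-indexed seed $\varphi_{k_j}$ is an analytic function of $t$ that does not vanish at $t=0$ (a simple zero there is excluded for an even eigenfunction), and each odd-indexed seed is $t^{1/2}$ times such a function. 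Extracting the Jacobian $(2x)^{\binom{\ell}{2}}$ from the Wronskian via the change-of-variables formula for Wronskians, one is left with a Wronskian in $t$ of $q$ generalized power series with leading exponent $0$ and $p$ with leading exponent $\tfrac12$; these occupy the distinct effective exponents $0,1,\dots,q-1$ and $\tfrac12,\tfrac32,\dots,p-\tfrac12$, so the order of this $t$-Wronskian at $t=0$ is $\tfrac12\,p(1-2q)$ and hence $\ord_0\varphi_\lambda=\binom{\ell}{2}+p(1-2q)=\tfrac{\dl(\dl+1)}{2}$, with $\dl$ as in~\eqref{eq:ddef}. The relevant leading determinant is, up to triangular changes of basis, a product of Vandermonde determinants in the $E_{k_j}$, hence nonzero because the eigenvalues are distinct, so this order is exact; this is statement~(i), and it yields $\varphi_\lambda(x)=\pm\,x^{\dl(\dl+1)/2}\,\Theta(x)$ with $\Theta$ even, analytic and $\Theta(0)\ne0$.

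For (ii) and (iii), apply Crum's construction to $H$ with the seeds $\varphi_{k_1},\dots,\varphi_{k_\ell}$: the functions $\psi_m:=\varphi_{\lambda,m}/\varphi_\lambda$, for $m\notin\{k_1,\dots,k_\ell\}$, are the eigenfunctions, with eigenvalue $E_m$, of $\tilde H=H-2D_{xx}\log\varphi_\lambda$. The potential $\tilde V$ is even, regular on $(0,a)$, and, because of the order-$\tfrac{\dl(\dl+1)}{2}$ zero of $\varphi_\lambda$ at the origin, behaves like $\dl(\dl+1)/x^2$ there; thus $\tilde H$ is again a Sturm--Liouville operator with the oscillation property, whose even and odd sectors on $(0,a)$ carry an angular momentum shifted according to this centrifugal term. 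On $(0,a)$ the zeros of $\psi_m$ are the zeros of $\varphi_{\lambda,m}$ and its poles are the zeros of $\varphi_\lambda$, and by semi-degeneracy these two sets are disjoint and simple; the same sign-change and interlacing argument that underlies Theorem~\ref{thm:main}, carried out now for $\tilde H$ on the half-line with the centrifugal endpoint taken into account, relates the number of zeros of $\varphi_\lambda$ in $(0,a)$ to the oscillation index of $\psi_m$, corrected by the shifted angular momentum. Equivalently, under $t=x^2$ the transformed even (resp.\ odd) seeds are the eigenfunctions of a half-line singular Sturm--Liouville problem with the oscillation property, to which Theorem~\ref{thm:main} applies, and tracking the associated partition produces the count. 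A final combinatorial step identifies it with~\eqref{eq:n+}: splitting into the cases $\dl\ge0$/$\dl<0$ and $\ell$ even/odd produces the $\lfloor\ell/2\rfloor$ and the absolute value, and $\tfrac12|\dl+(\ell-2\lfloor\ell/2\rfloor)|$ is precisely the number of zeros that, together with the $\tfrac{\dl(\dl+1)}{2}$ absorbed at the origin, do not survive on the positive axis; then~(iii) follows from~\eqref{eq:parity}.

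The main obstacle is the third paragraph: one must check that the oscillation and interlacing statements used in the proof of Theorem~\ref{thm:main} are unaffected by the $\dl(\dl+1)/x^2$ centrifugal endpoint of $\tilde H$ --- equivalently, that Theorem~\ref{thm:main} is genuinely available for the singular half-line problems produced by $t=x^2$ --- and that semi-degeneracy still supplies the non-degeneracy required of the half-line Wronskians at each stage. The closing combinatorial identification, though elementary, is error-prone because of the several parity and sign cases.
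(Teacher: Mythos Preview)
Your approach to part~(i) is genuinely different from the paper's and, in outline, correct. The paper proves $\ord_0(\Wl)=\tfrac{\dl(\dl+1)}{2}$ by a two-step induction on the length of $\lambda$, passing from $\Wl$, $\Wlj$, $\Wlk$ to $\Wljk$ via the identity~\eqref{eq:Widentityb} and a case analysis on the parities of $j$ and $k$. Your direct Frobenius computation is cleaner and avoids the induction; the only point that wants a sentence more is the claim that the leading determinant is a nonvanishing product of Vandermondes in the $E_{k_j}$ (this follows because the Taylor coefficients of each $\varphi_k$ at $0$ are, by the ODE recursion, monic polynomials in $E_k$ of the expected degree).

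For parts~(ii)--(iii), however, there is a real gap, and it is exactly the one you flag at the end. The paper does \emph{not} reduce the symmetric case to Theorem~\ref{thm:main}; it reruns the same two-step induction that proves Theorem~\ref{thm:main}, now carrying the multiplicity at the origin as extra data. Concretely, assuming the count~\eqref{eq:nW} for $\Wl$, $\Wlj$, $\Wlk$, one writes $\Wljk=\Wl\,w$ with $w=\Wr[\bphi_j,\bphi_k]$ as in~\eqref{eq:Wljkw}, applies the counting formula~\eqref{eq:n(w)} from the proof of Proposition~\ref{prop:3phis} with $\Wl$ playing the role of $\varphi_i$, and then does a parity-by-parity accounting of whether the origin contributes to $n_{ij}$, $n_{ik}$, $n_{ijk}$, or to none of these but still to $n(\Wljk)$.

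Your proposed shortcut --- pass to $t=x^2$ and invoke Theorem~\ref{thm:main} on a half-line problem --- does not go through as stated. After the substitution the even and odd seeds land in \emph{two different} half-line Schr\"odinger problems (with different centrifugal strengths), but $\Wl$ is a single Wronskian that mixes both parities; it does not factor as a product of an even-sector and an odd-sector Wronskian, so there is no single non-degenerate half-line eigenfunction sequence whose Wronskian, up to an explicit prefactor, equals $\Wl$. The alternative you sketch via the Crum transform $\tilde H$ and the oscillation count of $\psi_m=\varphi_{\lambda,m}/\Wl$ is closer in spirit to what the paper does, but it is not a black-box use of Theorem~\ref{thm:main} either: $\tilde V$ is singular at the interior point $x=0$, the $\psi_m$ have poles there, and one is forced back into the type-I--IV case analysis of Lemmas~\ref{lem:prevwr2} and~\ref{lem:prevwr3} with the origin as a distinguished point --- which is precisely the paper's induction, not a reduction to Theorem~\ref{thm:main}.
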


Note that Theorems \ref{thm:main} and \ref{thm:symm} apply to general eigenfunctions of an arbitrary potential, not necessarily polynomials.
However,  classical orthogonal polynomials fit naturally into this picture since up to a change of variable and multiplication by a nonzero pre-factor they are essentially the eigenfunctions of a Schr\"odinger problem \eqref{eq:H} for some very specific potentials.

Thus, it will not be difficult to derive the following corollaries concerning the number of zeros of an arbitrary Wronskian of classical orthogonal polynomials.
\begin{cor}\label{cor:Lag}
For almost every value of $\alpha\in(-1,\infty)$ the Wronskian of $\ell$ Laguerre polynomials $L_\lambda=\Wr\left[L_{k_1}^{(\alpha)},\dots,L_{k_\ell}^{(\alpha)}\right]$has $n(L_\lambda)$ simple zeros in $(0,\infty)$ where 
\begin{equation}\label{eq:zerosLag}
n(L_\lambda)=\sum_{j=1}^\ell (-1)^{\ell-j} \lambda_j.
\end{equation}
\end{cor}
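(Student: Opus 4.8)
\medskip
\noindent\textbf{Proof plan.}
The plan is to reduce Corollary~\ref{cor:Lag} to Theorem~\ref{thm:main} by realising the Laguerre polynomials as eigenfunctions of a Schr\"odinger operator. Since $y=L_n^{(\alpha)}(z)$ solves $zy''+(\alpha+1-z)y'+ny=0$, the Liouville change of variable $z=t^2$ together with the gauge transformation $\psi_n(t):=t^{\alpha+1/2}e^{-t^2/2}L_n^{(\alpha)}(t^2)$ produces
\begin{equation*}
-\psi_n''+\Bigl(t^2+\frac{\alpha^2-\tfrac14}{t^2}\Bigr)\psi_n=(4n+2\alpha+2)\,\psi_n,\qquad t\in(0,\infty).
\end{equation*}
For $\alpha>-\tfrac12$ this is exactly a Hamiltonian of the form \eqref{eq:H}: the potential is regular on $(0,\infty)$, the eigenvalues $E_n=4n+2\alpha+2$ are strictly increasing, and each $\psi_n$ is square integrable with $\psi_n(0^+)=\psi_n(\infty)=0$. (The remaining parameters $-1<\alpha\le-\tfrac12$ are handled either by a minor modification of the boundary condition at the origin, or simply by observing that the number of zeros of $L_\lambda$ in $(0,\infty)$ varies continuously with $\alpha$ away from the exceptional set identified below.)

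Next I would record how the Wronskian transforms under these two operations. Combining $\Wr[\mu h_1,\dots,\mu h_\ell]=\mu^\ell\,\Wr[h_1,\dots,h_\ell]$ with the change-of-variable identity $\Wr_t[h_1(z(t)),\dots,h_\ell(z(t))]=z'(t)^{\binom{\ell}{2}}\,\Wr_z[h_1,\dots,h_\ell]|_{z=z(t)}$, one obtains, with the multi-index $(k_1,\dots,k_\ell)$ attached to $\lambda$ as in \eqref{eq:lambdadef},
\begin{equation*}
\Wr_t[\psi_{k_1},\dots,\psi_{k_\ell}](t)=\bigl(t^{\alpha+1/2}e^{-t^2/2}\bigr)^\ell(2t)^{\binom{\ell}{2}}\,L_\lambda(t^2).
\end{equation*}
The prefactor is non-zero on $(0,\infty)$ and $t\mapsto t^2$ is a diffeomorphism of $(0,\infty)$ onto itself, so the real zeros of $\Wr_t[\psi_{k_1},\dots,\psi_{k_\ell}]$ in $(0,\infty)$ are in multiplicity-preserving bijection with those of $L_\lambda$ in $(0,\infty)$; in particular one family consists of simple zeros precisely when the other does, and the two cardinalities coincide. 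Since the partition attached to $\Wr_t[\psi_{k_1},\dots,\psi_{k_\ell}]$ is again $\lambda$, Theorem~\ref{thm:main} (applied to the Schr\"odinger family $\{\psi_n\}$) yields the count $\sum_{j=1}^\ell(-1)^{\ell-j}\lambda_j$, which is the right-hand side of \eqref{eq:zerosLag}. Hence \eqref{eq:zerosLag} holds for every $\alpha$ for which $\{\psi_n\}_{n=0}^\infty$ is non-degenerate in the sense of Definition~\ref{def:nondeg}, and the proof is reduced to establishing this for almost every $\alpha\in(-1,\infty)$.

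Since the gauge factor and the substitution are non-vanishing diffeomorphisms on $(0,\infty)$, non-degeneracy of $\{\psi_n\}$ is equivalent to the assertion that, for every partition $\lambda$ and all integers $m,n$, the polynomials $L_\lambda$ and $L_{\lambda,m}$ (respectively $L_{\lambda,m}$ and $L_{\lambda,n}$), viewed as polynomials in $z$, have no common root in $(0,\infty)$. The coefficients of $L_n^{(\alpha)}(z)$ are polynomials in $\alpha$, hence so are the coefficients of all these Wronskians, and therefore the resultants $\operatorname{Res}_z(L_\lambda,L_{\lambda,m})$ and $\operatorname{Res}_z(L_{\lambda,m},L_{\lambda,n})$ are polynomials in $\alpha$. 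Whenever such a resultant is not identically zero it vanishes on a finite set of $\alpha$; the union of these finite sets over the countably many partitions and pairs is then a Lebesgue-null (in fact countable) subset of $(-1,\infty)$, and outside it $\{\psi_n\}$ is non-degenerate, so \eqref{eq:zerosLag} holds. The whole argument thus comes down to showing that none of these resultants vanishes identically in $\alpha$, and I expect this to be the main obstacle. The cleanest route I see is a limiting argument as $\alpha\to+\infty$: under the classical scaling limit that takes $L_n^{(\alpha)}$ (after an affine change of variable and a normalisation) to the Hermite polynomial $H_n$, the same limit performed on the Wronskians reduces the required coprimality to the corresponding parameter-free statement for Wronskians of Hermite polynomials, which follows from their explicit structure treated in \S\ref{sec:Hermite}; alternatively, one may compute the leading coefficient in $\alpha$ of $\operatorname{Res}_z(L_\lambda,L_{\lambda,m})$ directly and verify that it does not vanish.
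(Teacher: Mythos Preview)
Your approach is essentially the paper's: the gauge and Liouville transformations you write down are exactly those recorded in Table~4, the Wronskian identity is Proposition~\ref{prop:WrOP}, and the resultant argument is precisely how the paper derives Corollary~\ref{cor:Lag} from Theorem~\ref{thm:main}. You are in fact more careful than the paper in isolating the one nontrivial point: one must know that each resultant $\operatorname{Res}_z(L_\lambda,L_{\lambda,m})$ and $\operatorname{Res}_z(L_{\lambda,m},L_{\lambda,n})$ is not the zero polynomial in $\alpha$. The paper simply asserts that ``the resultant of two such polynomials is a polynomial expression in $\alpha$, so it vanishes only for a finite number of values of $\alpha$'' without addressing this.

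However, your proposed route to close this gap does not work as written. You suggest passing to the scaling limit $\alpha\to\infty$ so that the Laguerre Wronskians degenerate to Hermite Wronskians, and then invoking the coprimality of $H_\lambda$ and $H_{\lambda,m}$, ``which follows from their explicit structure treated in \S\ref{sec:Hermite}''. But \S\ref{sec:Hermite} says exactly the opposite: the paper explicitly states that it lacks a proof that $\{H_n\}$ is semi-degenerate in general, and Proposition~\ref{prop:easy-Ves} handles only partitions of length two. So the Hermite limit transfers the question to an open problem rather than resolving it. Your alternative---computing the leading $\alpha$-coefficient of the resultant directly---is left as a bare suggestion; if you can actually carry it out you will have proved strictly more than the paper does, since its own argument carries the same unaddressed gap you flagged.
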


\begin{cor}\label{cor:Jac}
If $\alpha\neq \beta$, for almost every value of $\alpha,\beta$ the Wronskian determinant of $\ell$ Jacobi polynomials $P_\lambda=\Wr\left[P_{k_1}^{(\alpha,\beta)},\dots,P_{k_\ell}^{(\alpha,\beta)}\right]$has $n(P_\lambda)$ simple zeros in $(-1,1)$ where 
\begin{equation}
n(P_\lambda)=\sum_{j=1}^\ell (-1)^{\ell-j} \lambda_j.
\end{equation}
If $\alpha=\beta$, then for almost every value of $\alpha$ the Wronskian determinant of $\ell$ Gegenbauer polynomials $C_\lambda=\Wr\left[P_{k_1}^{(\alpha,\alpha)},\dots,P_{k_\ell}^{(\alpha,\alpha)}\right]$ has 
\begin{enumerate}
\item[i)] a root at $x=0$ of multiplicity $\frac{d_\lambda(d_\lambda+1)}{2}$
\item[ii)] $n_+(C_\lambda)$ simple real roots in $(0,1)$ where $n_+(C_\lambda)$ is given by
\begin{equation}
n_+(C_\lambda)=\frac{1}2\left(\sum_{i=1}^\ell (-1)^{\ell-i} \lambda_i-\frac{|d_\lambda+(\ell-2\lfloor\frac{\ell}{2}\rfloor)|}{2}\right)
\end{equation}
\item[iii)] the same number of roots in $(-1,0)$ due to the symmetry $C_\lambda(-x)=(-1)^{|\lambda|} C_\lambda(x)$
\end{enumerate}
\end{cor}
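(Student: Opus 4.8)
The plan is to obtain the two parts of the statement from Theorems~\ref{thm:main} and~\ref{thm:symm} through the standard Liouville reduction of the hypergeometric equation to Schr\"odinger form. Writing the Jacobi equation in Sturm--Liouville form $\big((1-x)^{\alpha+1}(1+x)^{\beta+1}y'\big)'+n(n+\alpha+\beta+1)(1-x)^{\alpha}(1+x)^{\beta}y=0$ and applying the change of variable $x=\sin t$ together with the gauge factor $\mu(x)=(1-x)^{(2\alpha+1)/4}(1+x)^{(2\beta+1)/4}$, the functions $\varphi_n(t)=\mu(\sin t)\,\Jac{\alpha}{\beta}{n}(\sin t)$ satisfy a Schr\"odinger equation $-\varphi_n''+V(t)\varphi_n=E_n\varphi_n$ on $(-\tfrac{\pi}{2},\tfrac{\pi}{2})$ with a P\"oschl--Teller type potential, regular in the interior, with strictly increasing eigenvalues and square-integrable eigenfunctions $\{\varphi_n\}$; the substitution sends $x=0$ to $t=0$ with non-vanishing derivative, and when $\alpha=\beta$ the potential is even while $\varphi_n(-t)=(-1)^n\varphi_n(t)$, since $\Jac{\alpha}{\alpha}{n}$ has parity $(-1)^n$, $\mu$ is even for $\alpha=\beta$, and $x\mapsto-x$ corresponds to $t\mapsto-t$. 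Thus, once non-degeneracy (resp.\ semi-degeneracy, on the diagonal $\alpha=\beta$) of $\{\varphi_n\}$ is known, Theorem~\ref{thm:main} (resp.~\ref{thm:symm}) applies. (The behaviour of $\mu$ at the endpoints when $\alpha$ or $\beta$ lies in $(-1,-\tfrac12]$ is a standard limit-circle issue that does not affect the count in the open interval; the Laguerre statement of Corollary~\ref{cor:Lag} is entirely analogous, on $(0,\infty)$ with gauge $x^{(2\alpha+1)/4}e^{-x/2}$ and change of variable $x=t^{2}/4$.)

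Next I would move the zeros of the Wronskian across the gauge and the change of variable by means of the elementary identities $\Wr[g f_1,\dots,g f_\ell]=g^{\ell}\,\Wr[f_1,\dots,f_\ell]$ and $\Wr_t\big[f_1(x(t)),\dots,f_\ell(x(t))\big]=\big(x'(t)\big)^{\binom{\ell}{2}}\,\Wr_x[f_1,\dots,f_\ell]\big(x(t)\big)$, which give $\varphi_\lambda(t)=\mu(\sin t)^{\ell}(\cos t)^{\binom{\ell}{2}}\,P_\lambda(\sin t)$ up to a nonzero constant, with $P_\lambda=\Wr\big[\Jac{\alpha}{\beta}{k_1},\dots,\Jac{\alpha}{\beta}{k_\ell}\big]$. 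Since the prefactor is smooth and strictly positive on $(-\tfrac{\pi}{2},\tfrac{\pi}{2})$, the zeros of $P_\lambda$ in $(-1,1)$ are in multiplicity-preserving bijection with those of $\varphi_\lambda$, with $x=0$ corresponding to $t=0$; and because $\mu$ and the substitution are common to all indices, $\{\varphi_n\}$ is non-degenerate (resp.\ semi-degenerate) exactly when $\{\Jac{\alpha}{\beta}{n}\}$ is. Feeding this into Theorems~\ref{thm:main} and~\ref{thm:symm} produces precisely the count $n(P_\lambda)=\sum_{j=1}^{\ell}(-1)^{\ell-j}\lambda_j$, and in the Gegenbauer case the root of multiplicity $d_\lambda(d_\lambda+1)/2$ at $x=0$ together with $n_+(C_\lambda)$ simple roots in $(0,1)$ and the same number in $(-1,0)$.

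It remains to show that $\{\Jac{\alpha}{\beta}{n}\}$ is non-degenerate (resp.\ semi-degenerate when $\alpha=\beta$) for almost every parameter value, which is the crux. Fix a partition $\lambda$ and integers $m,n$ not among the $k_j$; the Wronskians $P_\lambda$, $P_{\lambda,m}$, $P_{\lambda,n}$ are polynomials in $x$ with coefficients in $\mathbb{Q}[\alpha,\beta]$, so the parameter set on which $P_\lambda$ and $P_{\lambda,m}$ (resp.\ $P_{\lambda,m}$ and $P_{\lambda,n}$) share a root lies in the zero locus of $\mathrm{Res}_x(P_\lambda,P_{\lambda,m})$ (resp.\ $\mathrm{Res}_x(P_{\lambda,m},P_{\lambda,n})$), a polynomial in $(\alpha,\beta)$; if this resultant is not identically zero its zero set is Lebesgue-null, and the union over the countably many triples $(\lambda,m,n)$ is still null, outside of which the sequence is non-degenerate --- in the Gegenbauer case one works with the reduced Wronskians $C_\lambda/x^{\nu_\lambda}$, whose resultants detect common roots away from the origin, origin roots being permitted. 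For $\ell=1$ the required non-vanishing is immediate, as $\Jac{\alpha}{\beta}{k_1}$ has simple zeros and is coprime to $\Jac{\alpha}{\beta}{m}$, so it shares no root with $\Wr[\Jac{\alpha}{\beta}{k_1},\Jac{\alpha}{\beta}{m}]=\Jac{\alpha}{\beta}{k_1}\Jacp{\alpha}{\beta}{m}-\Jacp{\alpha}{\beta}{k_1}\Jac{\alpha}{\beta}{m}$. For general $\ell$ I would proceed either inductively, peeling off one Darboux transformation via the Crum identity $\Wr[\Jac{\alpha}{\beta}{k_1},\dots,\Jac{\alpha}{\beta}{k_\ell},\Jac{\alpha}{\beta}{m}]=\Jac{\alpha}{\beta}{k_1}\,\Wr\big[g_{k_2},\dots,g_{k_\ell},g_m\big]$ with $g_j=\Wr[\Jac{\alpha}{\beta}{k_1},\Jac{\alpha}{\beta}{j}]/\Jac{\alpha}{\beta}{k_1}$ (which reduces the question to the non-degeneracy of the shorter sequence $\{g_j\}$, modulo the poles that $g_j$ develops at the zeros of $\Jac{\alpha}{\beta}{k_1}$), or by a confluence argument contracting the Jacobi family to the Hermite family, whose (semi-)non-degeneracy is established in \S\ref{sec:Hermite}. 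The main obstacle is precisely this last step: proving, uniformly over all partitions, that none of the resultants in this infinite family vanishes identically --- which is the reason the hypotheses in the corollaries are phrased as holding only for \emph{almost every} value of the parameters.
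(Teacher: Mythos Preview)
Your approach is essentially the same as the paper's: the Liouville reduction and the Wronskian factorisation under gauge and change of variable are exactly the content of Proposition~\ref{prop:WrOP} (the paper uses $z=\sin 2x$ on $(-\tfrac{\pi}{4},\tfrac{\pi}{4})$ rather than your $x=\sin t$ on $(-\tfrac{\pi}{2},\tfrac{\pi}{2})$, which is a harmless rescaling), and the non-degeneracy argument via resultants in the parameters is precisely the one the paper sketches for Laguerre and then invokes verbatim for Jacobi.

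Two remarks. First, your suggested fallback of a confluence argument to the Hermite family will not close the gap: the paper explicitly states that semi-degeneracy of $\{H_n\}$ is \emph{not} proved in general (only the length-two case, Proposition~\ref{prop:easy-Ves}), so you cannot borrow it. Second, you are right to flag the non-identical-vanishing of the resultants as the crux; the paper itself is rather casual here, simply asserting that the resultant ``is a polynomial expression in $\alpha$, so it vanishes only for a finite number of values'' without verifying it is not the zero polynomial. Your write-up is in fact more honest about this than the paper's, but neither you nor the paper supplies the missing verification, and the ``almost every'' conclusion rests on it.
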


The previous corollaries are a direct application of Theorems \ref{thm:main} and \ref{thm:symm}, after performing a gauge transformation and showing that for generic values of the parameters, these orthogonal polynomials are non-degenerate in the sense of Definition \ref{def:nondeg}. 

In the case of Hermite polynomials, we believe that Theorem \ref{thm:symm} holds but we lack a proof in the general case that the sequence $\{H_n\}_{n=0}^\infty$ is non-degenerate.
Even the simpler question of whether two Hermite polynomials can have a common root other than zero seems to be unanswered in the literature.
The non-degeneracy of  $\{H_n\}_{n=0}^\infty$ is intimately related to Conjecture 1 formulated by Felder et al. in \cite{Felder2012a} concerning the simplicity of all the roots (real and complex) of a Wronskian of Hermite polynomials $H_\lambda$.
Assuming non-degeneracy of Hermite polynomials, we can prove Conjecture 2 in \cite{Felder2012a} which gives the number of real and purely imaginary roots of the Wronskian of a double partition of Hermite polynomials.

It is noteworthy that Karlin and Szeg\H{o}'s Theorem \ref{thm:Karlin} applies only to sequences of consecutive orthogonal polynomials for which we would have $W(n,\ell,x)=P_\lambda$ for $\lambda=(n,n,\dots,n)$. The results for the number of zeros given by Karlin and Szeg\H{o}'s Theorem can thus be seen as a particular case of the alternate sum \eqref{eq:altersum}. However, the premises for Theorem \ref{thm:Karlin} require just an orthogonal polynomial system with respect to an arbitrary measure, so they need not satisfy a Sturm-Liouville system. The validity of formula \eqref{eq:altersum} seems thus to be larger, and it may apply to sequences of arbitrary orthogonal polynomials, not just to the classical or exceptional ones. It was already conjectured by Dur\'an in \cite{Duran2014a} that Theorem \ref{thm:Adler} applies not just to eigenfunctions but to arbitrary orthogonal polynomials. Our numerical explorations confirm this fact, and in fact allow us to formulate the following more general conjecture.
\begin{conj}\label{conj:OParb}
Let $\{P_n\}_{n=0}^\infty$ be an orthogonal polynomial system with respect to an arbitrary positive measure $d\mu=W\,dx$ supported on an interval $I$ of the real line. If  $\{P_n\}_{n=0}^\infty$ is non-degenerate in the sense of Definition \ref{def:nondeg} then the Wronskian determinant of an arbitrary sequence $P_\lambda$ has
\[n(P_\lambda)=\sum_{j=1}^\ell (-1)^{\ell-j} \lambda_j\]
simple real zeros in the support of $\mu$.
\end{conj}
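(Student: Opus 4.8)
\emph{Towards Conjecture \ref{conj:OParb}.} The natural strategy---which succeeds in the Sturm--Liouville setting but whose validity for an arbitrary measure rests on an oscillation estimate that I cannot yet establish---is a two-step induction on the length $\ell$ of the partition, arranged so that the only measure-dependent input is the classical fact that $P_n$ has $n$ simple real zeros in $\operatorname{supp}\mu$ and that the zeros of $P_n$ and $P_{n+1}$ interlace. The case $\ell=1$ is exactly this fact, since then $k_1=\lambda_1$ and $n(P_\lambda)=\lambda_1=\deg P_{k_1}$. The case $\ell=2$ is also needed to prime the recursion; it extends Karlin and Szeg\H{o}'s Theorem \ref{thm:Karlin} to non-consecutive pairs $k_1<k_2$ and should follow by analysing $\Wr[P_{k_1},P_{k_2}]=P_{k_1}P_{k_2}'-P_{k_1}'P_{k_2}$ directly, using the three-term recurrence and the interlacing of the zeros of $P_{k_1},P_{k_1+1},\dots,P_{k_2}$. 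For $\ell\ge3$ one writes the Wronskian of a length-$\ell$ partition as $P_{\tilde\lambda,m,n}$ with $\tilde\lambda$ a partition of length $\ell-2$ and $P_m,P_n$ two of the entries, and invokes the inductive hypothesis at lengths $\ell-2$ and $\ell-1$.

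The algebraic engine is the Desnanot--Jacobi (Sylvester) identity for Wronskians, which for \emph{arbitrary} smooth functions states
\[
\Wr\bigl[P_{\lambda,m},\,P_{\lambda,n}\bigr]=P_\lambda\cdot P_{\lambda,m,n}
\]
up to an overall sign. Being purely algebraic, it holds for any orthogonal polynomial system and not only for Sturm--Liouville eigenfunctions; used with $\tilde\lambda$ in place of $\lambda$ it expresses $P_{\tilde\lambda,m,n}$ through the three lower Wronskians $P_{\tilde\lambda}$, $P_{\tilde\lambda,m}$, $P_{\tilde\lambda,n}$, which the inductive hypothesis controls. The two clauses of Definition \ref{def:nondeg} then show that the real zeros of $P_{\tilde\lambda,m,n}$ in $\operatorname{supp}\mu$ are exactly those real zeros of $\Wr[P_{\tilde\lambda,m},P_{\tilde\lambda,n}]$ there that are not zeros of $P_{\tilde\lambda}$, and that they are simple (simplicity being carried along as part of the induction); hence $n(P_{\tilde\lambda,m,n})$ equals the number of real zeros of $\Wr[P_{\tilde\lambda,m},P_{\tilde\lambda,n}]$ in $\operatorname{supp}\mu$ minus $n(P_{\tilde\lambda})$.

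Counting the real zeros of $\Wr[P_{\tilde\lambda,m},P_{\tilde\lambda,n}]$ is then a Rolle/Sturm exercise: with $g=P_{\tilde\lambda,m}/P_{\tilde\lambda,n}$ one has $g'=-\Wr[P_{\tilde\lambda,m},P_{\tilde\lambda,n}]/P_{\tilde\lambda,n}^{2}$, so the interior critical points of $g$ are exactly those zeros, and a sign count between consecutive poles of $g$---fed by the inductively known $n(P_{\tilde\lambda,m})$, $n(P_{\tilde\lambda,n})$, by their interlacing, and by the signs of the two Wronskians near the endpoints of $\operatorname{supp}\mu$ (fixed by $\deg P_{\tilde\lambda,m}$, $\deg P_{\tilde\lambda,n}$ and the nonvanishing leading coefficients, available in closed form for any orthogonal polynomial system)---should determine this number up to a parity that is pinned by $\deg P_\lambda=|\lambda|\equiv n(P_\lambda)\pmod{2}$. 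A routine manipulation of alternating sums then reconciles the outcome with $n(P_\lambda)=\sum_{j=1}^{\ell}(-1)^{\ell-j}\lambda_j$.

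The step I expect to be the true obstruction is precisely this Sturm-type count. In the Schr\"odinger setting of Theorem \ref{thm:main} it is where the second-order equation \eqref{eq:H} is indispensable: from $\varphi_k''=(V-E_k)\varphi_k$ one gets $\frac{d}{dx}\Wr[\varphi_m,\varphi_n]=(E_m-E_n)\varphi_m\varphi_n$, a function of constant sign, and, after iterated Darboux transformations, analogous monotonicity statements that force the oscillation and interlacing patterns; eigenfunction zeros are moreover automatically simple. A general orthogonal polynomial system obeys only a three-term recurrence, and its derivatives are not orthogonal with respect to $d\mu$, so none of this is available and some other mechanism must supply the oscillation input. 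The most promising substitutes are: (i) the variation-diminishing and sign-regularity properties of the totally positive kernel attached to an orthogonal polynomial system---the device Karlin and Szeg\H{o} used for consecutive sequences, and which plausibly controls the sign changes of arbitrary Wronskian minors; (ii) a homotopy argument deforming $d\mu$ to a classical weight, along which the number of real zeros of $P_\lambda$ in $\operatorname{supp}\mu$ stays locally constant because (by non-degeneracy) no zero may collide with another or escape through an endpoint of the support---the difficulty being to perform such a deformation within the class of non-degenerate measures while controlling the support; or (iii) a direct recurrence-based analysis. Finally, any \emph{unconditional} statement for a concrete family such as $\{H_n\}_{n=0}^\infty$ would in addition require proving that the family is non-degenerate in the sense of Definition \ref{def:nondeg}---open even for pairs of Hermite polynomials---which is why the conjecture retains non-degeneracy as a hypothesis.
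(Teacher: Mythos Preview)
The statement you are attempting is labelled as a \emph{conjecture} in the paper, and the paper does \emph{not} provide a proof of it. It is offered on the basis of numerical evidence and explicitly flagged as open: the authors write that ``to prove such a result for orthogonal polynomials with respect to an arbitrary measure would require a different technique.'' So there is no ``paper's own proof'' to compare against.

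That said, your proposal is honest about its status: it is not a proof either, and you correctly locate the gap. The two-step induction via the Sylvester identity $\Wr[P_{\lambda,m},P_{\lambda,n}]=P_\lambda\cdot P_{\lambda,m,n}$ is precisely the skeleton the paper uses to prove Theorem~\ref{thm:main} in the Schr\"odinger case, and the non-degeneracy hypothesis plays the same role there. Where your sketch stalls---the Sturm-type oscillation count for $\Wr[P_{\tilde\lambda,m},P_{\tilde\lambda,n}]$---is exactly where the paper's argument leans on the second-order equation: the identity $w'=(E_m-E_n)\bphi_m\bphi_n$ (your ``function of constant sign'') is what pins down the zero count between consecutive critical points, and it has no analogue for a general three-term recurrence. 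The paper's Lemmas~\ref{lem:prevwr2} and~\ref{lem:prevwr3} are built entirely on this relation. Your suggested substitutes (total positivity \`a la Karlin--Szeg\H{o}, homotopy to a classical weight, recurrence analysis) are reasonable directions, but none is carried out, and the paper offers nothing further along any of them. In short: your outline reproduces the paper's inductive architecture, identifies the same obstruction the authors do, and---like the paper---stops there.
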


\section{Zeros of Wronskians of eigenfunctions of Schr\"odinger's equation}
\label{sec:eigenfunctions}

In this section we will prove Theorems \ref{thm:main} and \ref{thm:symm} for the number of zeros of the Wronskian  of an arbitrary sequence of eigenfunctions of Schr\"odinger's equation. We will study all the possible degenerate cases appearing in the Wronskian of two and three eigenfunctions, and then we will introduce the non-degeneracy condition and prove the main theorems by induction.
The techniques involved in the proof are a refinement of those employed in \cite{Adler}, together with some algebraic identities satisfied by Wronskian determinants. Namely, we will use the following algebraic identity satisfied by the Wronskian determinant:
\begin{equation}\label{eq:Widentitya}
\Wr[\varphi_{k_1},\dots,\varphi_{k_\ell},\varphi_{k_{\ell+1}}, \varphi_{k_{\ell+2}}]=\frac{\Wr\left[\Wr[\varphi_{k_1},\dots,\varphi_{k_\ell},\varphi_{k_{\ell+1}}],\Wr[\varphi_{k_1},\dots,\varphi_{k_\ell},\varphi_{k_{\ell+2}}]\right]}{\Wr[\varphi_{k_1},\dots,\varphi_{k_\ell}]}.
\end{equation}
which can be written in compact terms as 
\begin{equation}\label{eq:Widentityb}
\Wljk=\frac{\Wr[\Wlj,\Wlk]}{\Wl}.
\end{equation}
using the notation introduced in \eqref{eq:flm}-\eqref{eq:flmn}.
First, we need to introduce some preliminary notions.

\begin{definition}
Given a function $f$ in $(a,b)$ and a point $x_0\in(a,b)$, let $\ord_{x_0}(f)$ denote the order of $x_0$ as a pole or root of $f$, i.e. $\ord_{x_0}(f)=1$ if $f$ has a simple root at $x=x_0$ and $\ord_{x_0}(f)=0$ if $x_0$ is neither a root nor a pole of $f$.
We also denote by $n(f)$ the number of times that $f$ vanishes in $(a,b)$ (not counting the multiplicities of the roots).
\end{definition}

\begin{definition}\label{def:ftypes}
Let  $f$ be a $C^2$ function defined in a punctured neighbourhood of $x_0\in \Rset$. We will consider the following possible behaviours of $f$ and its derivatives at $x_0$:
\begin{enumerate}
\item[] Type I:  $\,\,\,\,f(x_0)\neq0$,\quad $f'(x_0)= 0$,\quad $f''(x_0)=-kf(x_0)$ 
\item[] Type II: $\,\,f(x_0)\neq 0$,\quad  $f'(x_0)= 0$,\quad  $f''(x_0)= kf(x_0)$ 
\item[] Type III:   $f(x_0)=0$,\quad  $f'(x_0)= 0$,\quad  $f''(x_0)= 0$ 
\item[] Type IV:   $f$ has a pole at $x_0$.
\end{enumerate}
where $k$ is a positive constant.
\end{definition}

The following lemma will be necessary to count the number of zeros:

\begin{lem}
\label{lem:prevwr2}

Consider a function $f\in C^2([a,b])$ such that $f'$ does not vanish in $(a,b)$ and the behaviour of $f$ at $a$ and $b$ is of type I, II or III.
\begin{enumerate}
\item[i)] If $f$ is of type I at $a$ and $b$, the it has a simple zero in $(a,b)$.
\item[ii)] If $f$ is of type II or III at an endpoint, then it is of type I at the other endpoint and it has constant sign in $(a,b)$.
\end{enumerate}
\end{lem}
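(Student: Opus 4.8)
The plan is to read off the sign behaviour of $f$ near each endpoint directly from its Type, and then use the fact that $f'$ has constant sign on $(a,b)$ (so $f$ is strictly monotone there) to count zeros. First observe the elementary local picture. At an endpoint where $f$ is of Type I we have $f(x_0)\neq0$, $f'(x_0)=0$, and $f''(x_0)=-kf(x_0)$ with $k>0$; hence $f''(x_0)$ has the opposite sign to $f(x_0)$, so $x_0$ is a strict local extremum of $f$ at which $f$ takes a value of a definite sign, and $f$ moves \emph{towards} zero as we leave the endpoint into the interior. At a Type II endpoint, $f''(x_0)=kf(x_0)$ has the \emph{same} sign as $f(x_0)\neq0$, so $f$ moves \emph{away} from zero into the interior and in particular keeps the sign of $f(x_0)$ on a one-sided neighbourhood. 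At a Type III endpoint, $f(x_0)=f'(x_0)=f''(x_0)=0$; here I would use that $f'$ is continuous on $[a,b]$ and nonvanishing on the open interval, together with $f'(x_0)=0$, to conclude that $f'$ has a constant sign on all of $(a,b)$ and that $f$ tends to $0$ monotonically from one side, so again $f$ keeps a single sign near $x_0$.

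For part (i): suppose $f$ is Type I at both $a$ and $b$. Since $f'$ does not vanish on $(a,b)$ and $f\in C^1([a,b])$ with $f'(a)=f'(b)=0$, the derivative $f'$ has a single sign throughout $(a,b)$; say $f'>0$, the other case being symmetric. Then $f$ is strictly increasing on $[a,b]$, so $f(a)<f(b)$. From the Type I condition at $a$, $f$ has a local minimum at $a$ with $f''(a)=-kf(a)$; since $f$ increases immediately to the right of $a$, this minimum is consistent only with $f(a)<0$. Symmetrically, the local maximum at $b$ forces $f(b)>0$. Hence $f(a)<0<f(b)$, and by strict monotonicity $f$ has exactly one zero in $(a,b)$, which is simple because $f'\neq0$ there.

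For part (ii): suppose $f$ is of Type II or III at one endpoint, say $a$ (the case of $b$ is symmetric). As above $f'$ has constant sign on $(a,b)$, so $f$ is strictly monotone. By the local analysis, at a Type II or III endpoint $a$ the function $f$ keeps a constant sign on a right-neighbourhood of $a$: for Type II this is the sign of $f(a)\neq0$; for Type III, $f$ approaches $0$ monotonically from one side without crossing. If $f$ were of Type II or III also at $b$, the same argument at $b$ combined with strict monotonicity would be contradictory — monotone $f$ cannot move away from $0$ at both ends, nor can it be Type III (hence tending to $0$) at both ends while strictly monotone with nonzero derivative — so $f$ must be of Type I at $b$. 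It remains to show $f$ has no zero in $(a,b)$. Near $a$, $f$ has a fixed sign, say $f>0$; by strict monotonicity $f$ can only change sign at most once, and a change would force $f(b)\le 0$. But the Type I condition at $b$ forces, exactly as in part (i), $f(b)$ to have the sign compatible with the monotone approach to its extremum at $b$, namely $f(b)>0$. Therefore $f$ never vanishes on $(a,b)$ and has constant sign.

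The main obstacle I expect is the careful bookkeeping in the Type III case: one must argue that $f'\in C^0([a,b])$, $f'(x_0)=0$, and $f'$ nonvanishing on the open interval together really do force $f$ to approach $0$ monotonically from a single side (rather than, say, oscillating), and then combine this with the monotonicity to rule out a second Type III/II endpoint; this is where all three hypotheses ($C^2$, $f'\neq0$ inside, prescribed endpoint type) are used simultaneously. The rest is the sign chase matching local extrema to the global monotone direction.
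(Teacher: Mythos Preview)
Your approach is essentially the same as the paper's: both fix the sign of $f'$ on $(a,b)$, then read off the sign of $f$ at each endpoint from the Type via the sign of $f''$, and conclude by monotonicity. The paper's version is slightly more economical because it records the single observation $f''(a)\geq 0\geq f''(b)$ (from $f'(a)=f'(b)=0$, $f'>0$ inside) and then reads off $f(a)\geq 0$, $f(b)\leq 0$ directly in the Type II/III cases, which makes your ``moves away from $0$ at both ends'' contradiction immediate rather than heuristic; you should sharpen that step along these lines.
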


\begin{proof}
Without loss of generality we can assume $f$ to be increasing, i.e.
\begin{equation} \label{eq:inequalityw}
f(a)<f(b)
\end{equation} 
Since $f'(a)=f'(b)=0$ and  $f'(x)>0$ for $x\in(a,b)$, it follows that $f''(a)\geq0$ and $f''(b) \leq 0$. Therefore
\begin{equation} \label{eq:inequalitywpp}
f''(a)\geq f''(b).
\end{equation}
If $f$ is of type I at both endpoints, then $f$ and $f''$ have opposite signs at these points and we have $f(a)<0$ and $f(b)>0$, so $f$ must have a simple zero in $(a,b)$.
Note that $f$ must be of type I at least at one of the endpoints, since otherwise both inequalities  \eqref{eq:inequalityw}, \eqref{eq:inequalitywpp} cannot be simultaneously satisfied.
Without loss of generality, if $f$ is of type II or III at $a$, then $f(a)\geq 0$. At $b$, $f$ must be type I, so $f(b)>0$ and $f$ has constant sign in $(a,b)$.
\end{proof}

Adler proved \cite{Adler} that the Wronskian of two eigenfunctions has constant sign if and only if they are consecutive. A very straightforward generalization allows to prove the following proposition.

\begin{prop} 
\label{prop:wr2}
Let $\varphi_i$ and $\varphi_j$, $i<j$, be two eigenfunctions of the Schr\"odinger problem \eqref{eq:H}.
Then, \( \Wr[\varphi_i, \varphi_j]\) vanishes $j-i-1$ times inside $(a,b)$.  
If $x^*$ is one such root,  its multiplicity is 
\begin{equation}
\ord_{x^*}(\Wr[\varphi_i,\varphi_j])=\begin{cases}
3 &   \text{ if } \varphi_i(x^*)= \varphi_j(x^*)=0\\
1 &  \text{ otherwise.} 
\end{cases}
\end{equation}
Moreover, in the latter case neither $\varphi_i$ nor $\varphi_j$ vanish at $x^*$.
\end{prop}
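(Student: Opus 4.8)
The plan is to study $W := \Wr[\varphi_i,\varphi_j] = \varphi_i\varphi_j' - \varphi_i'\varphi_j$ directly, exploiting that Schr\"odinger's equation $\varphi_n'' = (V - E_n)\varphi_n$ forces the relations
\[
W' = (E_i - E_j)\,\varphi_i\varphi_j, \qquad W'' = (E_i - E_j)\bigl(\varphi_i'\varphi_j + \varphi_i\varphi_j'\bigr),
\]
together with $E_i < E_j$ since $i<j$. First I would settle the statements on multiplicities. At a zero $x^*$ of $W$ with $\varphi_i(x^*)=0$ but $\varphi_j(x^*)\neq 0$ one gets $W(x^*) = -\varphi_i'(x^*)\varphi_j(x^*)\neq 0$ (the zeros of $\varphi_i$ being simple), a contradiction; the same holds with $i,j$ interchanged, so at any zero of $W$ either both or neither of $\varphi_i,\varphi_j$ vanish. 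If neither vanishes then $W'(x^*)=(E_i-E_j)\varphi_i(x^*)\varphi_j(x^*)\neq 0$, so $x^*$ is a simple zero of $W$. If both vanish then $W(x^*)=0$ and, since $\varphi_i\varphi_j$ has a zero of order exactly $2$ at $x^*$, so does $W'$, whence $\ord_{x^*}(W)=3$. This gives the multiplicity dichotomy and the last sentence of the Proposition.

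For the counting I would subdivide $(a,b)$ at the zeros of $\varphi_i\varphi_j$ and apply Lemma \ref{lem:prevwr2} on each resulting closed subinterval. On the interior of such a subinterval $\varphi_i\varphi_j$ does not vanish, so $W'\neq 0$ there; and at each subdivision point the displayed relations identify the behaviour of $W$ in the sense of Definition \ref{def:ftypes}. At a zero of $\varphi_i$ alone, $W(x^*)\neq 0$ and $W''(x^*) = (E_j-E_i)W(x^*)$, so $W$ is of Type II; at a zero of $\varphi_j$ alone, $W''(x^*) = -(E_j-E_i)W(x^*)$, so $W$ is of Type I; and at a common zero of $\varphi_i$ and $\varphi_j$, as well as at the endpoints $a$ and $b$ (where $\varphi_i=\varphi_j=0$, in the limiting sense if an endpoint is infinite), one has $W=W'=W''=0$, so $W$ is of Type III. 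Lemma \ref{lem:prevwr2} then tells us that a subinterval with both endpoints of Type I contains exactly one simple zero of $W$, that a subinterval with an endpoint of Type II or III contains no zero of $W$, and that no subinterval can be of Type II or III at both endpoints.

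The count is then combinatorial, using that $\varphi_i$ and $\varphi_j$ have respectively $i$ and $j$ simple zeros in $(a,b)$. Let $c$ be the number of zeros they share. Among the subdivision points of $[a,b]$ there are then $j-c$ of Type I (the zeros of $\varphi_j$ alone) and $i+2$ of Type II or III (the $i-c$ zeros of $\varphi_i$ alone, the $c$ common zeros, and the two endpoints). Since no two of the latter are consecutive, they cut the Type I points into $i+1$ nonempty runs of sizes $b_1,\dots,b_{i+1}$ with $\sum_{r} b_r = j-c$, and a run of length $b_r$ produces exactly $b_r-1$ subintervals whose two endpoints are both of Type I, hence $b_r-1$ simple zeros of $W$. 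Summing, $W$ has $\sum_r(b_r-1) = j-c-i-1$ simple zeros together with $c$ zeros of multiplicity $3$, that is, $j-i-1$ distinct zeros in $(a,b)$, as claimed. The delicate step, and the one most prone to error, is the type classification at the subdivision points together with the piecewise application of Lemma \ref{lem:prevwr2}; when an endpoint is infinite the vanishing conditions there must be read in the limiting sense and the compact interval in the lemma replaced by an exhaustion, as in \cite{Adler}.
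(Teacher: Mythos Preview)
Your proof is correct and follows essentially the same strategy as the paper's: compute $W'$ and $W''$ via the Schr\"odinger equation, classify the subdivision points as Type I/II/III according to Definition~\ref{def:ftypes}, and then apply Lemma~\ref{lem:prevwr2} on each monotone piece. The only difference is organizational: the paper subdivides $[a,b]$ at the zeros $x_1<\cdots<x_j$ of $\varphi_j$ and asks which of the $j-1$ intervals contain a zero of $\varphi_i$, whereas you subdivide at all zeros of $\varphi_i\varphi_j$ together with the endpoints and count runs of Type~I points between Type~II/III points; both bookkeepings lead to the same count $j-i-1$, and both rely on Lemma~\ref{lem:prevwr2}(ii) to rule out two consecutive Type~II/III points.
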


\begin{proof}
Let us denote by
\begin{eqnarray}\label{eq:w}
w&=&\Wr[\varphi_i, \varphi_j]=\varphi_i \varphi'_j - \varphi'_i \varphi_j,\\
 \label{eq:wp}
w'&=& \varphi_i \varphi''_j - \varphi''_i \varphi_j =\delta \varphi_i\varphi_j,\\
\label{eq:wpp}
w''&=& -\delta (\varphi_i \varphi'_j + \varphi'_i \varphi_j) ,
\end{eqnarray}
where \(\delta=E_j -E_i >0. \)

Let $y_1<...<y_i$ and $x_1<...<x_j$  be the zeros of $\varphi_i$ and $\varphi_j$ in $(a,b)$. From standard Sturm Liouville theory we know that these zeros are simple and   \mbox{$x_1<y_k<x_j$ $\forall k=1,...,i$}. From \eqref{eq:wp} we see that these are the only zeros of $w'$ in $(a,b)$, and from \eqref{eq:w}-\eqref{eq:wpp} it is also clear that
\begin{eqnarray}
\label{eq:wppx} w''(y_k)&=&\delta w(y_k),\qquad k=1,\dots,i \\
\label{eq:wppy} w''(x_k)&=&-\delta w(x_k), \qquad k=1,\dots,j
\end{eqnarray}

We denote by $n_{ij}$ the number of common zeros of $\varphi_i$ and $\varphi_j$ in $(a,b)$. At each of these points, $z_k,k=1,\dots, n_{ij}$ it is clear that $w(z_k)=w'(z_k)=w''(z_k)=0$, but $w'''(z_k)\neq0$. Therefore $z_k$ is a triple root of $w$.

From Defintion \ref{def:ftypes} we see that:
\begin{enumerate}
\item $w$ is of type I at the points $x_k$ which are roots of $\varphi_j$ only.
\item $w$ is of type II at the points $y_k$ which are roots of $\varphi_i$ only.
\item $w$ is of type III at the common roots $z_k$ of $\varphi_i$ and $\varphi_j$,

\end{enumerate}

Note first that the roots of $w$ lie in $[x_1,x_j]$. Since $w'$ has a constant sign outside that interval and $w(a)=w(b)=0$, then $w$ cannot vanish in $(a, x_1)\bigcup(x_j, b)$. 
Let us count the number of roots that $w$ can have in $[x_1,x_j]$.  From Lemma \ref{lem:prevwr2} and the type I-III behaviour of $w$ at the roots of $\varphi_i$ and $\varphi_j$, we see that between every two consecutive roots of $\varphi_j$ there can be at most one root of $\varphi_i$. Suppose initially that $n_{ij}=0$, i.e. that all roots are distinct. Then there are  $j-1$ intervals $(x_k,x_{k+1}), k=1,\dots,j-1$, $i$ of which contain a root of $\varphi_i$ and $j-1-i$ that do not. Lemma  \ref{lem:prevwr2} asserts that in each of the  latter intervals there is exactly one simple root of $w$.
If some of the roots of $\varphi_i$ coincide with the roots of $\varphi_j$, Lemma \ref{lem:prevwr2} guarantees that the total number of points where $w$ vanishes does not change, but $w$ has a triple instead of a simple root at those points.
\end{proof}

\begin{remark}
Note that in the coalescence process where one root of $\varphi_i$ approaches another root of $\varphi_j$, a simple root (which is different from the previous ones) and two complex conjugate roots of $\Wr[\varphi_i,\varphi_j]$ meet at the coalescence point giving a third order root.
\end{remark}

The following Lemma extends  Lemma \ref{lem:prevwr2} to the case when the function is allowed to have a pole at the endpoints of the interval.

\begin{lem}
\label{lem:prevwr3}
Consider a function $f\in C^2((a,b))$ such that $f'$ does not vanish in $(a,b)$.
We assume that $f$ is of type IV at one of the endpoints.
\begin{enumerate}
\item[i)] If $f$ is of type I or IV at the other endpoint, then $f$ has exactly one simple zero in $(a,b)$.
\item[ii)] If $f$ is of type II or III at the other endpoint, then $f$ has constant sign in $(a,b)$.
\end{enumerate}
\end{lem}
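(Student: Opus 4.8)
The plan is to follow the proof of Lemma \ref{lem:prevwr2} almost verbatim, the only new ingredient being the effect of a pole on a monotone function. Since $f\in C^2((a,b))$ and $f'$ has no zero in $(a,b)$, the intermediate value theorem forces $f'$ to have constant sign, so $f$ is strictly monotone on $(a,b)$; as each of the four types of Definition \ref{def:ftypes} is preserved under $f\mapsto -f$ (with monotonicity merely reversed), I may assume without loss of generality that $f$ is strictly increasing, and similarly that the type IV endpoint is $b$. The crucial remark is then that a strictly increasing function is bounded above near $b$ by its value at any interior point, so a pole at $b$ forces $f(x)\to +\infty$ as $x\to b^-$; by the same token a pole at $a$ would force $f(x)\to -\infty$ as $x\to a^+$.

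Next, exactly as in Lemma \ref{lem:prevwr2}, whenever $f$ is of type I, II or III at $a$ we have $f'(a)=0$, and since $f'>0$ on $(a,b)$ the quotient $f'(x)/(x-a)$ has positive limit, giving $f''(a)\ge 0$. I would then split into cases according to the behaviour at $a$. For part i): if $a$ is of type I, then $f''(a)=-kf(a)\ge 0$ together with $f(a)\ne 0$ forces $f(a)<0$; combined with $f(x)\to+\infty$ at $b$ and strict monotonicity this yields exactly one zero in $(a,b)$, necessarily simple since $f'$ does not vanish there. If $a$ is of type IV, then $f(x)\to-\infty$ at $a$ and $f(x)\to+\infty$ at $b$, so strict monotonicity again gives exactly one simple zero. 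For part ii): if $a$ is of type II, then $f''(a)=kf(a)\ge 0$ with $f(a)\ne 0$ forces $f(a)>0$, hence $f>0$ throughout $(a,b)$; if $a$ is of type III, then $\lim_{x\to a^+}f(x)=0$ is the infimum of $f$ on $(a,b)$, so again $f>0$ on $(a,b)$. In both sub-cases $f$ keeps constant sign.

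I do not anticipate a real obstacle: the statement is essentially a consequence of the monotonicity of $f$ together with the limit behaviour at a pole. The only points deserving a line of care are the type III sub-case at $a$ — where one must note that a strictly increasing function whose one-sided limit at $a$ equals its infimum cannot vanish at an interior point — and the correct use of the pole hypothesis, namely that a \emph{monotone} function with a pole at an endpoint necessarily tends to $+\infty$ or $-\infty$ there rather than oscillating, which is automatic here.
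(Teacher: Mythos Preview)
Your argument is correct and follows essentially the same route as the paper's proof: both reduce to monotonicity of $f$, assume without loss of generality the pole sits at $b$, and then split into the four endpoint types at $a$, using the sign relation between $f(a)$ and $f''(a)$ for types I and II. Your version is slightly more explicit (you additionally normalise to $f$ increasing and spell out why the pole forces $f\to+\infty$ at $b$), but the underlying reasoning is identical.
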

\begin{proof}
$f$ is monotonic in $(a,b)$. If it is of type IV at both endpoints, then it must approach $+\infty$ and $-\infty$ at the endpoints, and therefore it has exactly one simple zero in $(a,b)$.

Without loss of generality we assume that $f$ is of type IV at $b$. If $f$ is of type I at $a$, then $f(a)$ and $f'(x), \forall x\in (a,b)$ have opposite sign. By continuity, $f$ has exactly one simple zero in $(a,b)$. If $f$ is of type II at $a$, then $f(a)$ and $f'(x), \forall x\in (a,b)$ have the same sign, so $f$ does not vanish in $(a,b)$. The same is true if $f$ is of type III at $a$: since it is strictly monotonic and it vanishes at $a$, it has constant sign in $(a,b)$.
\end{proof}

In order to extend the previous result to the Wronskian of three eigenfunctions we recall that if $\varphi_i$, $\varphi_j$ are eigenfunctions of Schr\"odinger's equation \eqref{eq:H} with a regular potential $V$, then the following functions
\begin{equation}\label{eq:bphi}
\bphi_{j}=\frac{\Wr[\varphi_i, \varphi_j]}{\varphi_i},\qquad j\neq i
\end{equation}
satisfy Schr\"odinger's equation:
\begin{equation}
-\bphi_j''+ \bar V \bphi_j=E_j \bphi_j
\end{equation}
for the transformed potential
\begin{equation}
\bar{V} =V-2D_{xx}(\log \varphi_i).
\end{equation}
Note that the potential $\bar V$ will have double poles at the zeros of $\varphi_i$ and we see thus the necessity of Lemma \ref{lem:prevwr3} to treat the behaviour of the eigenfunctions at the poles of the potential.

We are now ready to state the result for the Wronskian of three eigenfunctions

\begin{prop}\label{prop:3phis}
Let $\varphi_i,\varphi_j$ and $\varphi_k$  with $i<j<k$ be three eigenfunctions of \eqref{eq:H}. The Wronskian determinant $\Wr[\varphi_i,\varphi_j,\varphi_k]$  vanishes exactly $k-j+i-1-n_{ijk}$ times in $(a,b)$, where $n_{ijk}$ is the number of simultaneous  roots of $\varphi_i$, $\varphi_j$ and $\varphi_k$ in $(a,b)$. Let $x^*$ be one such root, then its multiplicity is given by 
\begin{equation} \label{eq:ordthm1}
\ord_{x^*}(\Wr[\varphi_i,\varphi_j,\varphi_k])=\begin{cases}
6 & \text{ if } \varphi_i(x^*)=\varphi_j(x^*)=\varphi_k(x^*)=0 \\
3 &   \text{ if } \varphi_i(x^*)\neq 0 \text{ and } \bphi_j(x^*)=\bphi_k(x^*)=0 \\
1 &  \text{ otherwise, }
\end{cases}
\end{equation}
where $\bphi_j=\varphi_i^{-1}\,\Wr[\varphi_i,\varphi_j]$ and $\bphi_k=\varphi_i^{-1}\,\Wr[\varphi_i,\varphi_k]$.
\end{prop}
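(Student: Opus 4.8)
The plan is to reduce the three-eigenfunction case to the two-eigenfunction case already handled in Proposition \ref{prop:wr2}, by exploiting the Darboux factorization \eqref{eq:bphi}. Concretely, I would use the Wronskian identity \eqref{eq:Widentitya}--\eqref{eq:Widentityb} with the single seed function $\varphi_i$, which gives
\[
\Wr[\varphi_i,\varphi_j,\varphi_k]=\frac{\Wr\bigl[\Wr[\varphi_i,\varphi_j],\Wr[\varphi_i,\varphi_k]\bigr]}{\varphi_i}=\varphi_i\,\Wr[\bphi_j,\bphi_k],
\]
where $\bphi_j,\bphi_k$ are the Darboux-transformed eigenfunctions of the potential $\bar V=V-2D_{xx}(\log\varphi_i)$ with eigenvalues $E_j,E_k$. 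So away from the zeros of $\varphi_i$, the zero-counting problem for $\Wr[\varphi_i,\varphi_j,\varphi_k]$ is exactly the zero-counting problem for the Wronskian of two eigenfunctions of $\bar V$. The subtlety is that $\bar V$ is no longer regular: it has double poles at the $i$ zeros $y_1<\cdots<y_i$ of $\varphi_i$, and the $\bphi_j$ (and $\bphi_k$) have a simple zero at each such $y_m$ by construction since $\Wr[\varphi_i,\varphi_j]$ vanishes there while $\varphi_i$ vanishes simply. This is precisely why Lemma \ref{lem:prevwr3} was set up: on each of the $i+1$ subintervals of $(a,b)$ cut out by the $y_m$, the function $w:=\Wr[\bphi_j,\bphi_k]$ is $C^2$, its derivative $w'=(E_k-E_j)\bphi_j\bphi_k$ vanishes only at the interior zeros of $\bphi_j$ and $\bphi_k$, and at each endpoint $w$ is of type I, II, III, or IV (type IV occurring exactly at the interior poles $y_m$, type III at a common zero of $\bphi_j,\bphi_k$, etc.).

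Next I would count, interval by interval, using Lemmas \ref{lem:prevwr2} and \ref{lem:prevwr3}. The transformed functions $\bphi_j$ and $\bphi_k$ satisfy a Schr\"odinger equation with potential $\bar V$ on each regular subinterval, so between consecutive zeros of $\bphi_k$ there is at most one zero of $\bphi_j$ exactly as in the proof of Proposition \ref{prop:wr2}, and in the complementary gaps Lemma \ref{lem:prevwr2}(i) forces exactly one simple zero of $w$, while Lemma \ref{lem:prevwr3}(i) handles gaps that abut a pole $y_m$. The bookkeeping amounts to: the total zero count of $\bphi_k$ in $(a,b)$ minus the total zero count of $\bphi_j$ in $(a,b)$, adjusted by $\pm 1$ at each endpoint according to type, with the pole endpoints contributing like type-I endpoints by Lemma \ref{lem:prevwr3}. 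Since $\bphi_j$ is an eigenfunction of $\bar V$ with eigenvalue $E_j$ and $\bar V$ is obtained from $V$ by a Darboux transformation deleting $\varphi_i$, standard oscillation theory (Crum/Darboux) tells us $\bphi_j$ has $j-1$ zeros and $\bphi_k$ has $k-1$ zeros in $(a,b)$ when these are counted in the transformed picture; combining, $w=\Wr[\bphi_j,\bphi_k]$ has $(k-1)-(j-1)-1=k-j-1$ zeros as a Wronskian of two $\bar V$-eigenfunctions with no common zeros, but one must then \emph{add back} the $i$ zeros at the poles $y_m$, since $\Wr[\varphi_i,\varphi_j,\varphi_k]=\varphi_i\cdot w$ and $\varphi_i$ vanishes at each $y_m$ while $w$ has a pole there whose order must be computed. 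A local Laurent analysis at $y_m$ shows $w=\Wr[\bphi_j,\bphi_k]$ has a simple pole there (each $\bphi$ has a simple zero, their Wronskian a simple pole), so $\varphi_i\cdot w$ is regular and nonzero at $y_m$ when $n_{ijk}$ counts only the triple coincidences; after reconciling the $\pm1$ endpoint corrections one arrives at the claimed total $k-j+i-1-n_{ijk}$. For the multiplicity formula \eqref{eq:ordthm1}: at a point where $\bphi_j(x^*)=\bphi_k(x^*)=0$ with $\varphi_i(x^*)\neq 0$, the same cube-root argument as in Proposition \ref{prop:wr2} applied to $\bphi_j,\bphi_k$ (they satisfy a second-order ODE, so a common zero is simple for each, and $\Wr[\bphi_j,\bphi_k]$ has $w=w'=w''=0$, $w'''\neq 0$ there) gives $\ord_{x^*}=3$; at a point where $\varphi_i=\varphi_j=\varphi_k=0$ simultaneously, a direct Wronskian Taylor expansion (all three functions and their first derivatives vanish, each has a simple zero) shows the determinant vanishes to order $1+2+3=6$; all other zeros are simple by Lemma \ref{lem:prevwr2}.

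The main obstacle I anticipate is the careful local analysis at the poles $y_m$ of the transformed potential, i.e. verifying rigorously that $\varphi_i\cdot\Wr[\bphi_j,\bphi_k]$ is regular and nonvanishing at each $y_m$ (so that these points contribute neither a spurious zero nor a missing one to the global count), and correctly matching the type I/II/III/IV endpoint conditions across the $i+1$ subintervals so that the alternating $\pm 1$ corrections telescope to exactly $+i$. In particular one has to check that $\bphi_j$ cannot have an interior zero \emph{at} a pole $y_m$ other than the structural simple zero, and that $\bphi_j,\bphi_k$ have no common zero at any $y_m$ (which would be a point where $\varphi_i,\varphi_j,\varphi_k$ are not all zero yet $\bphi_j=\bphi_k=0$ — this needs the interlacing of zeros of $\varphi_i,\varphi_j$ and of $\varphi_i,\varphi_k$ from Proposition \ref{prop:wr2}). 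Once the pole bookkeeping is pinned down, the rest is the same oscillation argument as in Proposition \ref{prop:wr2} transplanted to the Darboux-transformed problem, and the induction scaffolding for the general Theorem \ref{thm:main} will rest on iterating exactly this construction.
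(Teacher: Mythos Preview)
Your overall architecture is exactly the paper's: factor $\Wr[\varphi_i,\varphi_j,\varphi_k]=\varphi_i\,w$ with $w=\Wr[\bphi_j,\bphi_k]$, then count zeros of $w$ on the subintervals cut out by the special points, using Lemmas~\ref{lem:prevwr2} and~\ref{lem:prevwr3}. However, the local analysis at the zeros $y_m$ of $\varphi_i$ is wrong, and this propagates into the zero count.

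At a point $y_m$ where $\varphi_i(y_m)=0$ but $\varphi_j(y_m)\neq 0$, one has $\Wr[\varphi_i,\varphi_j](y_m)=-\varphi_i'(y_m)\varphi_j(y_m)\neq 0$, so $\bphi_j=\Wr[\varphi_i,\varphi_j]/\varphi_i$ has a simple \emph{pole} at $y_m$, not a simple zero. (This is the type~IV behaviour of Lemma~\ref{lem:prevwr3}, and the first row of the paper's Table~1.) Consequently, by Proposition~\ref{prop:wr2}, $\bphi_j$ has $j-i-1$ zeros in $(a,b)$ (and $i$ simple poles), not $j-1$ zeros as you write; similarly $\bphi_k$ has $k-i-1$ zeros. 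Your sentence ``each $\bphi$ has a simple zero, their Wronskian a simple pole'' is internally inconsistent and gets the right conclusion for the wrong reason: it is because each $\bphi$ has a simple \emph{pole} at $y_m$ that $w$ has a simple pole there.

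This changes the bookkeeping. The zeros of $w'=-\delta\,\bphi_j\bphi_k$ are the zeros of $\bphi_j$ and of $\bphi_k$; at the $y_m$ the function $w$ has poles. The type~I points (zeros of $\bphi_k$) together with the type~IV points (the $y_m$) number $(k-i-1)+i=k-1$ and delimit $k-2$ monotonicity intervals for $w$. Exactly $n(\bphi_j)=j-i-1$ of these intervals contain a zero of $\bphi_j$ (type~II), where Lemmas~\ref{lem:prevwr2}(ii) and~\ref{lem:prevwr3}(ii) forbid a zero of $w$; the remaining $(k-2)-(j-i-1)=k-j+i-1$ intervals each contribute one simple zero of $w$ by Lemmas~\ref{lem:prevwr2}(i) and~\ref{lem:prevwr3}(i). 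Thus $n(w)=k-j+i-1$ directly, and since $\varphi_i\cdot w$ is regular and nonzero at each $y_m$, there is nothing to ``add back'': $n(\Wr[\varphi_i,\varphi_j,\varphi_k])=n(w)$. Your route to the same number (computing $k-j-1$ and then adding $i$) is based on the incorrect premise that the $\bphi$'s are regular at the $y_m$ and that extra zeros of the product appear there; neither is the case.
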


\begin{proof}
The identity for Wronskian determinants \eqref{eq:Widentitya} in the case of three functions reads
\begin{equation}\label{eq:W3phis}
\Wr[\varphi_i, \varphi_j, \varphi_k]=\frac{\Wr[\Wr[\varphi_i, \varphi_j], \Wr[\varphi_i, \varphi_k]]}{\varphi_i}.
\end{equation}
which can be rewritten as
\begin{equation}\label{eq:wr3w}
\Wr[\varphi_i, \varphi_j, \varphi_k]
  =\varphi_i w,
\end{equation}
where
\begin{equation}
w=\Wr[\bphi_{j}, \bphi_{k}].
\end{equation}
The derivatives of $w$ obey the relations
\begin{equation}\label{eq:wbar}
w'= \bphi_j\bphi_k''-\bphi_j''\bphi_k=-\delta \bphi_j\bphi_k,\qquad
w''=-\delta (\bphi_j\bphi_k'+\bphi_j'\bphi_k), \qquad \delta=E_k-E_j>0.
\end{equation}
We shall first count the number of times that $w$ vanishes in $(a,b)$. To this end, we need to consider the behaviour of $w$ and its derivatives at each of the points where $\varphi_i$, $\bphi_j$ and $\bphi_k$ vanish. We shall denote by $n_{ij}$ the number of common roots of $\varphi_i$ and $\varphi_j$ where $\varphi_k$ does not vanish, and $n_{ijk}$ the number of points where all three functions vanish. Likewise, $\bar n_{jk}$ denotes the number of common roots of $\bphi_j$ and $\bphi_k$ where $\varphi_i$ is not zero. Using expressions \eqref{eq:wr3w}, \eqref{eq:wbar} and Proposition \ref{prop:wr2}, it is not hard to derive the results gathered in Table 1.

\begin{table}[ht]\label{table1}
\begin{tabular}{|c|c|c|c|c|c|c|c|}
\hline 
$\varphi_i(x^*)$ & $\bphi_j(x^*)$& $\bphi_k(x^*)$ & $\ord_{x^*} \bphi_j$ & $\ord_{x^*} \bphi_k$ & $\ord_{x^*} w$ & type$_{x^*}$ $w$    & \# points  \\ 
\hline 
0 & $\neq 0$ & $\neq 0$ &  $\!\!\!\!\!-1$ & $\!\!\!\!\!-1$ &$\!\!\!\!\!-1$&   IV & $i-n_{ij}-n_{ik}-n_{ijk}$  \\ 
\hline 
$\neq 0$ & $0$ & $\neq 0$ &  $1$ & $0$ &0&   II & $j-i-1-n_{ij}-n_{ijk}-\bar n_{jk}$  \\ 
\hline 
$\neq 0$ & $\neq 0$ & $0$ &  $0$ & $1$ &0&   I & $k-i-1-n_{ik}-n_{ijk}-\bar n_{jk}$  \\ 
\hline 
$0$ & $0$ & $\neq 0$ &  $2$ & $\!\!\!\!\!-1$ &0&   II & $n_{ij}$  \\ 
\hline 
$0$ & $\neq 0$ & $0$ &  $\!\!\!\!\!-1$ & $2$ &0&   I & $n_{ik}$  \\ 
\hline 
$\neq 0$ & $0$ & $0$ &  $1$ & $1$ &3&   III & $\bar n_{jk}$  \\ 
\hline 
$0$ & $0$ & $0$ &  $2$ & $2$ &5&   III & $n_{ijk}$  \\ 
\hline 
\end{tabular}

\caption{Behaviour of $w$ at the roots of $\varphi_i$, $\bphi_j$ and $\bphi_k$} 
\end{table}

For illustrative purposes, we explain the fourth row in Table 1. If $\bphi_j(x^*)=0$ and $\varphi_i(x^*)=0$, then by \eqref{eq:bphi} $\varphi_j(x^*)=0$. By Propositon \ref{prop:wr2} $\Wr[\varphi_i, \varphi_j]$ has a triple root at $x^*$, so $\ord_{x^*} \bphi_j=2$. It is also clear that $\bphi_k$ must have a simple pole at $x^*$. 
We can thus write 
\[ \bphi_j=A(x)(x-x^*)^2,\qquad \bphi_k=\frac{B(x)}{x-x^*}\]
where $A(x)$ and $B(x)$ are regular and do not vanish at $x^*$. From \eqref{eq:wr3w} and \eqref{eq:wbar} we have
\begin{equation}
w(x^*)=-3A(x^*)B(x^*)  \qquad w''(x^*)=-\delta A(x^*)B(x^*)
\end{equation}
so $w$ is of type II at $x^*$. The rest of the entries in the table can be derived in a similar manner.

Suppose initially that $n_{ij}=n_{ik}=n_{ijk}=\bar n_{jk}=0$, i.e. that all roots are distinct. Let us first consider the roots of $\varphi_i$ and $\bphi_k$. Since by assumption $\bphi_j$ does not vanish in those points, from Table 1 we see that $w$ is of type I or IV there. There are exactly $n(\varphi_i)=i$ roots of the first kind and $n(\bphi_k)=k-i-1$ roots of the second type. Let us denote by $x_{\min}$ the smallest of such roots and $x_{\max}$ the largest. Since $\bphi_j$ and $\bphi_k$ vanish at the endpoints $a$ and $b$, then $w$ is of type III at those points. By Lemmas \ref{lem:prevwr2} and \ref{lem:prevwr3} there can be no roots of $\bphi_j$ and $w$ in $(a,x_{\min})\cup(x_{\max},b)$. The interval $[x_{\min},x_{\max}]$ contains all the roots of $w'$, so we can divide it into $n(\bphi_k)+n(\varphi_i)-1$ intervals, limited by the previous roots, where $w$ is monotonic. 

We now consider the position of the $n(\bphi_j)=j-i-1$ roots of $\bphi_j$ in $[x_{\min},x_{\max}]$. By Lemma \ref{lem:prevwr2}, between any two such roots there must be at least one root of $\varphi_i$ or $\bphi_k$. Out of the previous $n(\bphi_k)-n(\varphi_i)-1$ intervals, there are $n(\bphi_j)$ that contain exactly one root of $\bphi_j$. Lemmas \ref{lem:prevwr2} and \ref{lem:prevwr3} ensure that in these intervals there are no roots of $w$. The remaining $n(\bphi_k)-n(\bphi_j)-n(\varphi_i)-1$ intervals are limited by points where $w$ is of type I or IV and by \eqref{eq:wbar} $w'$ does not vanish inside them. By Lemmas \ref{lem:prevwr2} and \ref{lem:prevwr3} we conclude that $w$ has exactly  $n(\bphi_k)-n(\bphi_j)+n(\varphi_i)-1=k-j+i-1$ simple zeros, one in each of these intervals.
 
If some of the roots of $\bphi_j$ coincide with roots of $\bphi_k$ but not with roots of $\varphi_i$, Lemma \ref{lem:prevwr2} guarantees that the total number of points where $w$ vanishes does not change. 
If a root of $\varphi_i$ coincides with a root of $\bphi_j$  or $\bphi_k$ (or both),  then $w$ has one root less in $(a,b)$ with respect to the case where all roots are distinct.

We conclude then that the total number of roots that $w$ has in $(a,b)$ in the case where the roots of  $\varphi_i$ , $\bphi_j$  and $\bphi_k$ are allowed to coincide is:
\begin{equation} \label{eq:n(w)}
n(w)=n(\bphi_k)-n(\bphi_j)+n(\varphi_i)-1-n_{ij}-n_{ik}-n_{ijk},
\end{equation}
which evaluates to
\begin{equation}
n(w)=k-j+i-1-n_{ij}-n_{ik}-n_{ijk}.
\end{equation}

From \eqref{eq:wr3w} we see that these are not the only roots of $\Wr[\varphi_i,\varphi_j,\varphi_k]$: we need to consider also the roots $x^*$ of $\varphi_i$ that are not roots of $w$, i.e. those for which  $\ord_{x^*} w =0$. A look at Table 1 suffices to conclude that the total number of roots of $\Wr[\varphi_i,\varphi_j,\varphi_k]$ in $(a,b)$ is $n(w)+n_{ij}+n_{ik}$ which becomes $k-j+i-1-n_{ijk}$.
The multiplicity of the roots of $\Wr[\varphi_i,\varphi_j,\varphi_k]$ can be easily calculated using Lemmas \ref{lem:prevwr2} and \ref{lem:prevwr3} and expressions \eqref{eq:wr3w}-\eqref{eq:wbar}. This concludes the proof of Proposition \ref{prop:3phis}. 

\end{proof}

In order to prove Theorems \ref{thm:main} and \ref{thm:symm}, we first observe that the functions
\begin{equation}\label{eq:bphil}
\bphi_{j}:=\frac{\Wlj}{\Wl}= \frac{\Wr[\varphi_{k_1},\dots,\varphi_{k_\ell},\varphi_j]}{\Wr[\varphi_{k_1},\dots,\varphi_{k_\ell}]},\qquad j\neq{k_1,\dots, k_\ell}
\end{equation}
satisfy the equation:
\begin{equation}
-\bphi_j''+ \bar V_\lambda \bphi_j = E_j\bphi_j
\end{equation}
for the transformed potential
\begin{equation}
\bar{V}_\lambda =V-2D_{xx}(\log \Wl).
\end{equation}
Note that $\bphi_j$ are not true eigenfunctions of a well defined Schr\"odinger's problem because the potential $\bar V _\lambda$ has poles at the zeros of $\Wl$. 
We can rewrite \eqref{eq:Widentityb} as 
\begin{equation}\label{eq:Wljkw}
\Wljk  =\Wl w,
\end{equation}
where
\begin{equation}\label{eq:wdef}
w=\Wr[\bphi_{j}, \bphi_{k}].
\end{equation}
and $\bphi_j$ and $\bphi_k$ are defined by \eqref{eq:bphil}. It is straightforward to check that the derivatives of $w$ obey the relations \eqref{eq:wbar}.

\begin{proof}[Proof of Theorem \ref{thm:main}]
Let us prove the statement  by induction.
For $\ell=1$, $\varphi_{k_1}$  has $n(\varphi_{k_1})=k_1=\lambda_1$ simple real roots in $(a,b)$.  For $\ell=2$, from Proposition \ref{prop:wr2}, 
\[n(\Wr[\varphi_{k_1}, \varphi_{k_2}])=k_2-k_1-1=\lambda_2-\lambda_1.\] In addition, since $\varphi_{k_1}$ and $\varphi_{k_2}$ have no common roots, all the roots of $\Wr[\varphi_{k_1}, \varphi_{k_2}]$ are simple.
Formula \eqref{eq:altersum} holds thus for $\ell=1,2$. Next, we assume  that  formula \eqref{eq:altersum} holds for $\Wl$ and $\Wlj$ and all the roots in $(a,b)$ are simple.
We will use the same  argument as in the proof of Theorem \ref{prop:3phis} where $\Wl$  plays the role of $\varphi_i$ and $\bphi_j$ and $\bphi_k$ have the same role. Since $\{\varphi_n\}_{n=0}^\infty$ are non-degenerate by hypothesis, $\bphi_j$ has zeros at all those points where $\Wlj$ vanishes, i.e. $n(\bphi_j)=n(\Wlj)$ and $n_{ij}=n_{ik}=n_{ijk}=0$.  Using equation \eqref{eq:n(w)} we can write
\begin{equation}
n(w)=n(\Wlk)-n(\Wlj)+n(\Wl)-1.
\end{equation}
Since $\lambda_{\ell+1}=j-\ell$ and $\lambda_{\ell+2}=k-\ell-1$ and we assume that  \eqref{eq:altersum} holds for $\Wl$ and $\Wlj$, the previous expression evaluates to
\begin{equation}
n(w)=\sum_{i=1}^{\ell+2} (-1)^{\ell+2-1}\lambda_i
\end{equation}
From \eqref{eq:Wljkw} we finally conclude that $n(\Wljk)=n(w)$ since the simple zeros of $\varphi_\lambda$ are simple poles of $w$ due to the non-degeneracy condition.
Non-degeneracy also implies that all the roots of $\Wljk$ in $(a,b)$ are simple, which closes the induction and proves the desired result.
\end{proof}

\begin{proof}[Proof of Theorem \ref{thm:symm}]
Let us first prove by induction that 
\begin{equation}\label{eq:ord0}
\ord_{0}(\Wl)=\frac{\dl(\dl+1)}{2}.
\end{equation} 
Let $\ell$ be the length of $\lambda$. For $\ell=1$ it is clear that  $\varphi_{k_1}$ has a simple root at $x=0$ if $\dl=1$ and no root at $x=0$ otherwise ($\dl=-1$), so \eqref{eq:ord0} holds for $\ell=1$. For $\ell=2$, if $\dl=2$ or $\dl=0$ formula \eqref{eq:ord0} holds directly from Proposition \ref{prop:wr2}. If $\dl=-2$, i.e. $\varphi_{k_1}$ and $\varphi_{k_2}$ are odd, then their derivatives have a common root at $x=0$ and $\Wr[\varphi_{k_1}, \varphi_{k_2}]$  has a simple root at $x=0$. Therefore, formula \eqref{eq:ord0} holds also for $\ell=2$. 
Next, we will prove that if expression \eqref{eq:ord0} holds for $\Wl$ and $\Wlj$, it also holds for $\Wljk$ for any $\lambda$, $j$ and $k$.
To this end, first note that
\begin{equation}
d_{\lambda,j}=\begin{cases}
d_\lambda+1 &\text{if $j$ is odd,}\\
d_\lambda-1 &\text{if $j$ is even.}
\end{cases}
\end{equation}
Therefore, since \eqref{eq:ord0} holds for $\Wlj$ we have
\begin{equation}
\ord_0(\Wlj)=\begin{cases}
\ord_0(\Wl)+d_\lambda+1 &\text{if $j$ is odd,}\\
\ord_0(\Wl)-d_\lambda &\text{if $j$ is even.}
\end{cases}
\end{equation}
We will next show that \eqref{eq:ord0} holds for $\Wljk$ closing the induction.
All the possible cases are summarized in the following table: 
\begin{table}[h]
\begin{tabular}{|c|c|c|c|c|c|c|}
\hline
$j$ & $k$ &$d_{\lambda,j,k}$ & $\ord_0(\bphi_j)$ & $\ord_0(\bphi_k)$ &$\ord_0(w)$ & $\ord_0(\Wljk)$\\ \hline
odd & odd & $d_\lambda+2$ & $d_\lambda+1$ & $d_\lambda+1$ & $2d_\lambda+3$ & $\ord_0(\Wl) +2d_\lambda+3=\frac{(d_{\lambda,j,k}+2)(d_{\lambda,j,k}+3)}{2}$\\ \hline
odd & even &  $d_\lambda$ &$d_\lambda+1$ & $-d_\lambda$ & $0$ & $\ord_0(\Wl) =\frac{d_{\lambda,j,k}(d_{\lambda,j,k}+1)}{2}$\\ \hline
even & even & $d_\lambda-2$ & $-d_\lambda$ & $-d_\lambda$ & $-2\dl+1$ & $\ord_0(\Wl) -2d_\lambda+1= \frac{(d_{\lambda,j,k}-2)(d_{\lambda,j,k}-1)}{2}$\\ \hline
\end{tabular}

\caption{}
\end{table}

Note that from \eqref{eq:bphil} and \eqref{eq:Wljkw},  $\ord_{0}(\bphi_j)=\ord_{0}(\Wlj)- \ord_{0}(\Wl)$ and $\ord_{0}(\Wljk)=\ord_{0}(w)+ \ord_{0}(\Wl)$, while $\ord_{0}(w)$ can be obtained  from  \eqref{eq:wbar} and \eqref{eq:wdef} and taking into account the fact that $\Wlj$ and $\Wlk$ have a well defined parity.

Let us prove by induction that the number of times that $\Wl$ vanishes in the real interval $(-a,a)$ is given by 
\begin{equation}\label{eq:nW}
n(\Wl)=2n_+(\Wl)+\begin{cases}
0 & \text{if } \dl=-1, \dl=0\\
1 & \text{if } \dl \neq -1, 0
\end{cases}
\end{equation}
where we have used formula \eqref{eq:ord0} and $n_+(\Wl)$ is given by \eqref{eq:n+}. In addition, we will show that all those roots except maybe $x=0$ are simple.
For $\ell=1$, $\varphi_{k1}$ has $k_1=\lambda_1$ simple real roots. Since $\dl=1$ or $\dl=-1$, formula \eqref{eq:nW} holds for $\ell=1$.
For $\ell=2$, it can be proved directly from Proposition \ref{prop:wr2}.
Next, we assume that formula \eqref{eq:nW}  holds for $\Wl$ and $\Wlj$ and all real roots are simple except maybe $x=0$.
Again, we will use the same  argument as in the proof of Theorem \ref{prop:3phis} where $\Wl$  plays the role of $\varphi_i$ and $\bphi_j$ and $\bphi_k$ have the same role.
Since at $x=0$ we have the same behaviour as the one of Table 1, we can apply equation \eqref{eq:n(w)} where 
\begin{equation}
n_{ij}=\begin{cases}
1 & \text{if } \ord_{0}(\Wl)>0, \quad \ord_{0}(\bphi_j)>0, \quad \ord_{0}(\bphi_k)\leq 0 \\
0 & \text{otherwise}
\end{cases}
\end{equation} 
\begin{equation}
n_{ik}=\begin{cases}
1 & \text{if } \ord_{0}(\Wl)>0, \quad\ord_{0}(\bphi_k)>0, \quad \ord_{0}(\bphi_j)\leq 0 \\
0 & \text{otherwise}
\end{cases}
\end{equation} 
\begin{equation}
n_{ijk}=\begin{cases}
1 & \text{if } \ord_{0}(\Wl)>0,\quad \ord_{0}(\bphi_j)>0,\quad \ord_{0}(\bphi_k)>0 \\
0 & \text{otherwise}
\end{cases}
\end{equation}
Moreover, all the roots different from $x=0$ are simple. In order to take into account the total number of real roots of $\Wljk$ we must add to $n(w)$  one more root at $x=0$ if  $\ord_{0}(w)\leq 0$ and $\ord_0(\Wljk)>0$, which happens when 
\begin{itemize}
\item[i)] $\ord_0(w)=0$ and $\dl\neq -1,0$: $j$ and $k$ have different parity.
\item[ii)] $\ord_0(\Wl)>-\ord_0(w)>0$: $j$ and $k$ are odd and $\dl<-3$ or $j$ and $k$ are even and $\dl>2$.
\end{itemize} 
Thererefore
\begin{equation}
n(\Wljk)=n(\Wl)+n(\bphi_k)-n(\bphi_j)-1+m_{\lambda,j,k}
\end{equation}
where
\begin{equation}
m_{\lambda,j,k}=-n_{ijk}+
\begin{cases}
1 & \text{if $j$ and $k$ are odd and } \dl<-3 \\
1 & \text{if $j$ and $k$ are even and } \dl>2 \\
0 & \text{otherwise}
\end{cases}
\end{equation}
Applying this formula for all the possible parities of $j$ and $k$ and values of $\dl$ we obtain that formula \eqref{eq:nW} holds for $\ell+2$.

For instance, let us see in more detail the case when $\ord_{x_0}(\varphi_j)=\ord_{x_0}(\varphi_k)=1$. We have to take into account that 
\begin{equation}
2n_+(\Wlk)-2n_+(\Wlj) +2n_+(\Wl)-1=
\sum_{i=1}^{\ell+2} (-1)^{\ell+2-i} \lambda_i-\bigg| \frac{\dl+(\ell-2\lfloor\frac{\ell}{2}\rfloor)}{2}\bigg|
\end{equation}
where $\lambda_{\ell+1}=j-\ell$ and $\lambda_{\ell+2}=k-\ell-1$.
\begin{table}[ht]
\begin{tabular}{|c|c|c|c|c|c|}
\hline
$d_\lambda$& $n(\Wl)$& $n(\bphi_j)$& $n(\bphi_k)$&$m_{\lambda,j,k}$& $n(\Wljk)$ \\ \hline
$0$ & $2n_+(\Wl)$& $2n_+(\Wlj)+1$& $2n_+(\Wlk)+1$& $0$ &$2n_+(\Wljk)+1$  \\ \hline
$>0$ & $2n_+(\Wl)+1$& $2n_+(\Wlj)+1$& $2n_+(\Wlk)+1$& $\!\!\!\!-1$ & $2n_+(\Wljk)+1$  \\ \hline
$-1$ & $2n_+(\Wl)$& $2n_+(\Wlj)$& $2n_+(\Wlk)$& $0$ & $2n_+(\Wljk)+1$  \\ \hline
$-2, -3$ & \multirow{2}{*}{$2n_+(\Wl)+1$}& \multirow{2}{*}{$2n_+(\Wlj)$}& \multirow{2}{*}{$2n_+(\Wlk)$}& 0 &$2n_+(\Wljk)$\\ 
 $<-3$& & & & 1 &$2n_+(\Wljk)+1$  \\ \hline
\end{tabular}
\caption{}
\end{table}

\end{proof}

\section{Zeros of Wronskians of classical orthogonal polynomials}
\label{sec:OPs}

In this section we apply the Theorems \ref{thm:main} and \ref{thm:symm} proved in the previous Section to derive some results on the zeros of the Wronskian of classical orthogonal polynomials in their interval of orthogonality.
The first key observation is that classical orthogonal polynomials are essentially (up to a pre-factor and a change of variable) the eigenfunctions of a Schr\"odinger problem \eqref{eq:H}.
For every family of classical orthogonal polynomials $\{P_n\}_{n=0}^ \infty$ we can write
\begin{equation}\label{eq:phitoP}
\varphi_n(x)= \mu(x) P_n\big(z(x)\big) 
\end{equation}
where $\varphi_n(x)$ is the set of eigenfunctions of a Schr\"odinger problem \eqref{eq:H} for a given potential $V(x)$. More specifically, the three families of classical orthogonal polynomials are gathered in Table 4.
\begin{table}[ht]\label{table:clOP}
\begin{tabular}{|c|c|c|c|c|c|}
\hline
Class & $V(x)$ &  $(a,b)$ &  $\mu(x)$ & $P_n(z)$ & $z(x)$  \\
\hline
Hermite & $x^2$ & $(-\infty,\infty)$ & ${\rm e}^{-x^2/2}$ & $H_n(z)$ & $x$ \\
 \hline
Laguerre & $x^2+ \frac{\alpha^2-1/4}{x^2}$ & $(0,\infty)$ & $x^{\alpha+1/2}{\rm e}^{-x^2/2}$ & $L^{(\alpha)}_n(z)$ & $x^2$ \\
\hline
Jacobi & $\frac{\alpha^2-1/4}{\sin^2 (x-\pi/4)}+ \frac{\beta^2-1/4}{\cos^2 (x-\pi/4)}   $& $(-\frac{\pi}4,\frac{\pi}4)$  &$\big(\sin(x-\frac{\pi}4)\big)^{\alpha+1/2}\big(\cos (x-\frac{\pi}4)\big)^{\beta+1/2}$ & $P^{(\alpha,\beta)}_n(z)$ & $\sin 2x$ \\
\hline
\end{tabular}

\caption{}
\end{table}

It is also clear that $z'(x)>0$ and $\mu(x)>0$ for all $x\in(a,b)$, so it is not difficult to prove the following Proposition.

\begin{prop}\label{prop:WrOP}
For any given partition $\lambda=(\lambda_1,\dots,\lambda_\ell)$, if the functions $\varphi_n(x)$ are related to the polynomials $P_n(z)$ by \eqref{eq:phitoP}, then we have the following identity among their Wronskian determinants:
\begin{equation}\label{eq:WrphiP}
\Wr[\varphi_{k_1}(x),\dots,\varphi_{k_\ell(x)}]=\mu(x)^\ell\,\left(\frac{dz}{dx}\right)^\frac{\ell(\ell-1)}{2} \Wr[P_{k_1}(z),\dots,P_{k_\ell}(z)] 
\end{equation}

\end{prop}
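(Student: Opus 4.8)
The plan is to prove the Wronskian identity \eqref{eq:WrphiP} by relating the Wronskian with respect to $x$ to the Wronskian with respect to $z$ through the substitution $\varphi_n(x)=\mu(x)P_n(z(x))$, exploiting the two basic scaling rules for Wronskians: multiplying all entries by a common nonzero factor $g(x)$ multiplies the Wronskian by $g(x)^\ell$, while a change of variable $z=z(x)$ introduces a power of $dz/dx$. The first observation is that, because the Wronskian is multilinear in its columns (or rows), we immediately obtain
\begin{equation*}
\Wr[\mu\, P_{k_1}(z),\dots,\mu\, P_{k_\ell}(z)] = \mu^\ell\,\Wr[P_{k_1}(z(x)),\dots,P_{k_\ell}(z(x))],
\end{equation*}
where on the right the Wronskian is still taken with respect to $x$. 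This reduces the problem to showing that for any functions $g_1,\dots,g_\ell$ of $z$ one has
\begin{equation}\label{eq:chainWr}
\Wr_x[g_1(z(x)),\dots,g_\ell(z(x))] = \left(\frac{dz}{dx}\right)^{\frac{\ell(\ell-1)}{2}} \Wr_z[g_1(z),\dots,g_\ell(z)]\Big|_{z=z(x)}.
\end{equation}

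The cleanest route to \eqref{eq:chainWr} is induction on $\ell$, using the logarithmic-derivative (Darboux) factorization of the Wronskian: setting $h_i = g_i\circ z$, we have $\Wr_x[h_1,\dots,h_\ell] = h_1^{\,\ell}\,\Wr_x\!\big[(h_2/h_1)',\dots,(h_\ell/h_1)'\big]$, and $(h_i/h_1)' = \frac{dz}{dx}\,\big((g_i/g_1)'\circ z\big)$, so pulling the common factor $dz/dx$ out of the $(\ell-1)$ columns contributes $(dz/dx)^{\ell-1}$ and leaves an $(\ell-1)\times(\ell-1)$ Wronskian of functions of $z$ composed with $z(x)$, to which the inductive hypothesis applies with exponent $\frac{(\ell-1)(\ell-2)}{2}$; the arithmetic $(\ell-1) + \frac{(\ell-1)(\ell-2)}{2} = \frac{\ell(\ell-1)}{2}$ then closes the induction, with the base case $\ell=1$ trivial. (Alternatively one can expand each column $\partial_x^{j}[g_i(z(x))]$ via Faà di Bruno and use row operations to clear the lower-order terms, noting that the coefficient of $g_i^{(j)}(z)$ in $\partial_x^j$ is exactly $(z')^j$ plus lower-order derivative terms, so that after triangular row reduction the determinant picks up $\prod_{j=0}^{\ell-1}(z')^j = (z')^{\ell(\ell-1)/2}$; but the Darboux induction is shorter and avoids combinatorial bookkeeping.)

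The remaining point is the hypothesis that the relation \eqref{eq:phitoP} indeed produces a Schr\"odinger eigenvalue problem, i.e.\ that with $V(x)$, $\mu(x)$, $z(x)$ as in Table~4 one has $-\varphi_n'' + V\varphi_n = E_n\varphi_n$. This is a standard computation: one substitutes $\varphi_n=\mu\,P_n(z)$ into $-\varphi_n''+V\varphi_n$, uses the classical second-order ODE satisfied by $P_n$ (Hermite, Laguerre, or Jacobi), and checks that the first-order terms cancel precisely when $\mu$ and $z$ are chosen as indicated, leaving a potential independent of $n$ and an eigenvalue $E_n$ linear in $n$. Since $z'(x)>0$ and $\mu(x)>0$ on $(a,b)$, the gauge factor and the change of variables are genuine (invertible, sign-preserving), so zeros of $\Wr_x[\varphi_{k_1},\dots,\varphi_{k_\ell}]$ in $(a,b)$ correspond bijectively, with the same multiplicities, to zeros of $\Wr_z[P_{k_1},\dots,P_{k_\ell}]$ in $z(a,b)$; this is what makes the corollaries follow from Theorems \ref{thm:main} and \ref{thm:symm}.

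The main obstacle, such as it is, is bookkeeping rather than conceptual: one must make sure the exponent $\frac{\ell(\ell-1)}{2}$ comes out right under the chain rule, and (for the corollaries, though not for \eqref{eq:WrphiP} itself) one must separately verify that the classical polynomials are non-degenerate in the sense of Definition \ref{def:nondeg} for generic parameter values --- that genericity argument, not the Wronskian identity, is where the real work in Corollaries \ref{cor:Lag} and \ref{cor:Jac} lies. The identity \eqref{eq:WrphiP} itself is essentially a lemma about Wronskians under gauge and coordinate changes and carries no analytic difficulty.
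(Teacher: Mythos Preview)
Your argument is correct, and in fact you sketch \emph{two} valid proofs. The paper itself uses precisely your parenthetical alternative: it writes $\varphi_k^{(n)}=(z')^n\mu\,P_k^{(n)}+\sum_{j<n}f_{n,j}(\mu,z')\,P_k^{(j)}$ and then observes that the matrix $(\varphi_{k_i}^{(n)})$ is obtained from $(P_{k_i}^{(j)})$ by a lower-triangular row transformation with diagonal entries $(z')^n\mu$, so the determinant picks up $\prod_{n=0}^{\ell-1}(z')^n\mu=\mu^\ell(z')^{\ell(\ell-1)/2}$. Your primary route---splitting off the gauge factor first and then proving the change-of-variables identity \eqref{eq:chainWr} by Darboux induction---is a genuinely different organization of the same underlying content; it is arguably cleaner conceptually (it separates the two scaling rules) at the cost of invoking the identity $\Wr[h_1,\dots,h_\ell]=h_1^\ell\,\Wr[(h_2/h_1)',\dots,(h_\ell/h_1)']$, which some readers may regard as less elementary than row reduction.

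One small correction: the identity $\Wr[\mu f_1,\dots,\mu f_\ell]=\mu^\ell\,\Wr[f_1,\dots,f_\ell]$ does \emph{not} follow from multilinearity of the determinant in its columns, since the $j$-th entry of the $i$-th column is $(\mu f_i)^{(j)}$ rather than $\mu\cdot f_i^{(j)}$. The correct justification is again a triangular row operation (Leibniz gives $(\mu f_i)^{(j)}=\sum_{m\le j}\binom{j}{m}\mu^{(j-m)}f_i^{(m)}$, a lower-triangular combination with diagonal coefficient $\mu$), or, if you prefer to stay with your Darboux-style toolkit, the same induction you use for the change of variables. Either way the conclusion stands; just tidy up that one sentence.
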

We see in particular that the number and multiplicity of zeros of $\varphi_\lambda(x)$ in $(a,b)$ coincides with the those of $P_\lambda(z)$ in $(z(a),z(b))$.

\begin{proof}
Note that we can write the $n^{\rm th}$ derivative of $\varphi_k$ as
\begin{equation}\label{eq:phideriv}
\varphi_k^{(n)}=(z')^n \mu \, P_k^{(n)}+ \sum_{j=0}^{n-1} f_{n,j}(\mu,z') P^{(j)}_k
\end{equation}
where $f_{n,j}(\mu,z')$ is a polynomial expression in $\mu$, $z'$ and its derivatives up to order $n$.
Inserting this expression on the left hand side of \eqref{eq:WrphiP} and using the invariance of the determinant under linear combinations of its columns we see that only the first term in \eqref{eq:phideriv} matters, which leads to \eqref{eq:WrphiP}.
\end{proof}

The above proposition allows for a direct application of the formulas \eqref{eq:altersum} and $\eqref{eq:n+}$ in Theorems \ref{thm:main} and \ref{thm:symm} to the case of classical orthogonal polynomials. In order to complete the proof of Corollary \ref{cor:Lag} and \ref{cor:Jac} we need to discuss the degeneracy or semi-degeneracy of these families according to Definitions \ref{def:nondeg} and  \ref{def:semideg}.

Let us discuss first the non-degeneracy condition on Laguerre polynomials. The $k$ zeros of any given $L_k^{(\alpha)}$ move monotonously as $\alpha$ increases ,\cite{Dimitrov}. It is natural to expect that for some values of $\alpha$, two zeros of $L_k^{(\alpha)}$ and $L_j^{(\alpha)}$ coincide. As a matter of fact, the resultant of two such polynomials is a polynomial expression in $\alpha$, so it vanishes only for a finite number of values of $\alpha$. Repeating this argument for every possible pair of Laguerre polynomials, the values of $\alpha$ for which they have a common zero is a numerable set. The same argument can be extended to Wronskian determinants of any sequence of Laguerre polynomials to conclude that the set of $\alpha$ values for which the sequence $\{L_n^{(\alpha)}\}_{n=0}^\infty$ is non-degenerate is a numerable set.
This observation, together with \eqref{eq:WrphiP} implies that Corollary \ref{cor:Lag} follows from Theorem \ref{thm:main}.

For Jacobi polynomials two cases need to be distinguished. If $\alpha\neq\beta$ then the potential has no symmetry while for ultraspherical polynomials ($\alpha=\beta$) the potential is even. Similar arguments on the degeneracy as those presented above imply that Corollary \ref{cor:Jac} follows from Theorem \ref{thm:main} and  \ref{thm:symm}.

\subsection{Hermite polynomials}\label{sec:Hermite}
In the case of Hermite polynomials, we do not have a proof that $\{H_n\}_{n=0}^\infty$ is nongenerate in the sense of Definition \ref{def:nondeg}. This question is connected to a long standing conjecture in the theory of monodromy free potentials, (see Conjecture 1 in \cite{Felder2012a}):
\begin{conj}
For every partition $\lambda$, all the zeros (real and complex) of $H_\lambda$ are simple, except maybe $x=0$.
\end{conj}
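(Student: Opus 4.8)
The plan is to prove the conjecture by combining the algebraic machinery of Section~\ref{sec:eigenfunctions} with the rigidity of monodromy-free potentials. I would split the statement into two parts: \emph{(A)} every real zero of $H_\lambda$ other than possibly $x=0$ is simple, and \emph{(B)} every non-real zero is simple. Part~(A) is exactly what Theorems~\ref{thm:main} and~\ref{thm:symm} deliver once one knows that $\{H_n\}_{n=0}^\infty$ is non-degenerate (resp.\ semi-degenerate) in the sense of Definition~\ref{def:nondeg}, so the crux is (B) together with the still-open non-degeneracy itself. For (B) the real-variable Sturm arguments are unavailable, but the \emph{algebraic} identity \eqref{eq:Widentityb} survives: setting $\varphi_n=H_n(x)\,{\rm e}^{-x^2/2}$, Proposition~\ref{prop:WrOP} gives $\Wl={\rm e}^{-\ell x^2/2}H_\lambda$, so $\Wl$ and $H_\lambda$ share the same zeros with the same multiplicities in $\Cset$, while \eqref{eq:Wljkw} expresses $\Wljk=\Wl\,w$ with $w=\Wr[\bphi_j,\bphi_k]$. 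One can then run the induction scheme of Section~\ref{sec:eigenfunctions} purely at the level of multiplicities of \emph{complex} zeros, the only point to be established being that no multiple zero at a point $x^*\neq 0$ is created when passing from $\Wl,\Wlj$ to $\Wljk$.

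The key new ingredient is the following rigidity. The potential $\bar V_\lambda=x^2-2D_{xx}\log\Wl$ is obtained from the harmonic oscillator potential $x^2$ — which is entire, so that \emph{all} solutions of $-\psi''+x^2\psi=E\psi$ are entire for every $E$ — by a finite chain of Darboux transformations with rational coefficients. Such transformations are invertible and send meromorphic functions to meromorphic functions, so at every finite point all solutions of $-\psi''+\bar V_\lambda\psi=E\psi$ remain single-valued and meromorphic, for every $E$. Near a zero $x^*$ of $\Wl$ of multiplicity $\mu$ one has $\bar V_\lambda=2\mu/(x-x^*)^2+O((x-x^*)^{-1})$, so the indicial exponents at $x^*$ are the roots of $\rho(\rho-1)=2\mu$; their difference $\sqrt{1+8\mu}$ must be an integer (necessarily odd), and there is no logarithmic term because $\bar V_\lambda$ is a Darboux transform. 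Hence $2\mu=t(t+1)$, i.e.\ $\mu=\binom{t+1}{2}$ is a triangular number: every zero of $H_\lambda$ away from the origin automatically has triangular multiplicity, consistently with $\ord_{0}(H_\lambda)=\tfrac{\dl(\dl+1)}{2}$ in Theorem~\ref{thm:symm}.

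It remains to exclude triangular multiplicities $\mu\geq 3$ at nonzero points, which I believe is equivalent to the complex non-degeneracy of $\{H_n\}$. The most natural route is to strengthen Definition~\ref{def:nondeg} for Hermite polynomials to the assertion that $\Wl$, $\Wlj$ and $\Wljk$ never vanish simultaneously at a point $x^*\neq 0$; plugged into the induction, a jump of multiplicity at $x^*$ in $\Wljk=\Wl\,w$ would force $w$ and $\Wl$, hence $\Wlj$ and $\Wlk$, to vanish together at $x^*$, which is then excluded. Alternatively one may argue algebraically: the discriminant of $H_\lambda(x)/x^{\ord_{0}(H_\lambda)}$ and the resultants $\mathrm{Res}(\Wlj,\Wlk)$ are explicit rational numbers, so it suffices to show they are nonzero; for the rectangular and staircase shapes $\lambda$ this follows from the Painlev\'e~IV interpretation of $D_x\log H_\lambda$, whose rational solutions have only simple poles, but a uniform argument valid for all $\lambda$ is missing.

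I expect step (B) — equivalently, ruling out higher triangular multiplicities — to be the real obstacle, and it is already open in the base case $\ell=2$: whether two Hermite polynomials $H_n,H_m$ can share a zero $x^*\neq 0$. The three-term recurrence $H_{r+1}=2xH_r-2rH_{r-1}$ gives, from $H_n(x^*)=0$ and $x^*\neq 0$, that $H_{n-1}(x^*)\neq 0$, $H_{n+1}(x^*)\neq 0$ and $H_{n+2}(x^*)\neq 0$, but it permits $H_{n+3}(x^*)=0$, so the recurrence alone does not close the argument; a finer tool — arithmetic properties of the coefficients of $H_n$, the second-order linear ODE satisfied by $\Wr[H_n,H_m]$, or estimates (for instance $p$-adic) on the location of the zeros — seems to be required. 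Any proof of this base case, combined with the monodromy rigidity above and the inductive scheme of Section~\ref{sec:eigenfunctions}, would establish the conjecture.
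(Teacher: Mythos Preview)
The statement is a \emph{conjecture}, and the paper does not prove it: immediately after stating it the authors write that ``at the moment we lack a proof in the general case'' and restrict themselves to the special case of partitions of length~$2$ (Proposition~\ref{prop:easy-Ves}). Your proposal is likewise not a proof but a strategy with an explicitly acknowledged gap, so in that sense you and the paper are in agreement. The relevant comparison is therefore between your handling of the $\ell=2$ base case and the paper's.

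Here your proposal contains a concrete error. You assert that ``it is already open in the base case $\ell=2$: whether two Hermite polynomials $H_n,H_m$ can share a zero $x^*\neq 0$,'' and you note that the three-term recurrence alone does not settle it, speculating that ``arithmetic properties of the coefficients of $H_n$'' might. The paper \emph{does} settle this base case, and precisely via an arithmetic argument: Lemma~\ref{lem:H} invokes Schur's theorem that Hermite polynomials are irreducible over $\mathbb{Q}$; since the GCD of two Hermite polynomials, computed by Euclid's algorithm, has rational coefficients, any common factor would contradict irreducibility unless it is $x$ itself. Combined with Lemma~\ref{lem:simple} (which uses the differential equation to show that a multiple zero of $\Wr[H_m,H_n]$ forces a common zero of $H_m$ and $H_n$), this gives the full $\ell=2$ case of the conjecture. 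So the ``finer tool'' you ask for is exactly Schur irreducibility, and you should incorporate it.

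Beyond $\ell=2$, your monodromy-rigidity observation --- that every zero of $H_\lambda$ has triangular multiplicity because the indicial difference $\sqrt{1+8\mu}$ at a pole of $\bar V_\lambda$ must be an odd integer --- is a genuinely new ingredient not present in the paper, and it is correct as far as it goes. But it does not finish the argument: ruling out $\mu\in\{3,6,10,\dots\}$ at nonzero points is, as you yourself say, equivalent to the complex non-degeneracy of $\{H_n\}$, and neither the Schur argument nor your Darboux/monodromy considerations currently extend to arbitrary $\lambda$. Note also a circularity risk in your part~(A): invoking Theorems~\ref{thm:main} and~\ref{thm:symm} for Hermite already presupposes semi-degeneracy, which the paper explicitly leaves as an assumption in the Hermite case, so (A) is not independently established either.
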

Numerical explorations show a very strong support for this conjecture, but at the moment we lack a proof in the general case. We have at least a proof that holds for the case of partitions of length $2$:
\begin{prop}\label{prop:easy-Ves}
 The (real and complex) roots of $\Wr[H_m,H_n]$ are all simple, except $x=0$ that has multiplicity $3$ if both $m,n$ are odd.
\end{prop}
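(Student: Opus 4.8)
The plan is to work with the explicit structure of $w=\Wr[H_m,H_n]$ and exploit the fact that Hermite polynomials satisfy a second-order ODE, so that high-order vanishing of $w$ at a point forces a chain of relations on $H_m,H_n$ and their derivatives that can be controlled. Write $w = H_m H_n' - H_m' H_n$. As in \eqref{eq:wp}--\eqref{eq:wpp}, using $H_n'' = (x^2-2n-1)H_n$ type relations (in the Schr\"odinger gauge of Table~4, $\varphi_n'' = (x^2 - 2n-1)\varphi_n$), one gets $w' = \delta\,\varphi_m\varphi_n$ with $\delta = E_n - E_m = 2(n-m)\neq 0$, and $w'' = -\delta(\varphi_m\varphi_n' + \varphi_m'\varphi_n)$, $w''' = -\delta\big((\varphi_m''+\varphi_m'')\ldots\big)$ — more precisely $w''' = \delta\big( (E_m - E_n)\varphi_m\varphi_n - 2\varphi_m'\varphi_n' \big)$ after using the ODE. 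So the plan is: first translate the statement to the Schr\"odinger gauge via Proposition~\ref{prop:WrOP}, where it becomes a statement about $\Wr[\varphi_m,\varphi_n]$ and multiplicities are preserved; then analyze the possible common vanishing.

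First I would rule out non-real and non-zero real multiple roots. Suppose $x^* \neq 0$ is a root of $w$ of multiplicity $\geq 2$. Then $w(x^*) = w'(x^*) = 0$, i.e. $\varphi_m\varphi_n' - \varphi_m'\varphi_n$ and $\varphi_m\varphi_n$ both vanish at $x^*$. From $\varphi_m(x^*)\varphi_n(x^*) = 0$, at least one of $\varphi_m,\varphi_n$ vanishes at $x^*$; say $\varphi_m(x^*)=0$. Then $w(x^*) = 0$ forces $\varphi_m'(x^*)\varphi_n(x^*) = 0$. Since $H_m$ and its derivative cannot both vanish (the ODE would then force $H_m\equiv 0$), we have $\varphi_m'(x^*)\neq 0$, hence $\varphi_n(x^*) = 0$ as well. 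So $x^*$ is a common root of $\varphi_m$ and $\varphi_n$. Now use the ODE: $\varphi_m'' = (x^2 - E_m)\varphi_m$ vanishes at $x^*$, similarly $\varphi_n''$; and one checks inductively that $\varphi_m,\varphi_n$ vanish to exactly first order there, while $w$ picks up a zero of order exactly $3$ by the computation of $w'''(x^*) = -2\delta\,\varphi_m'(x^*)\varphi_n'(x^*) \neq 0$. This handles the real case and matches the claimed multiplicity~$3$; but to get \emph{all} roots I must also treat complex $x^*$, which is where the argument above needs care because "standard Sturm--Liouville" zero-simplicity does not apply off the real axis.

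The key extra input for complex roots is still purely algebraic: the ODE $\varphi^{(2)} = (x^2 - E)\varphi$ is an identity of entire functions, so if $\varphi_m(x^*) = \varphi_m'(x^*) = 0$ for \emph{any} $x^*\in\Cset$ then $\varphi_m \equiv 0$ by uniqueness of solutions of a linear ODE with analytic coefficients — contradiction. Therefore the dichotomy "either $\varphi_m$ or $\varphi_n$ vanishes simply at $x^*$, never both derivative and value" holds over $\Cset$, and the computation of $w, w', w'', w'''$ in terms of $\varphi_m,\varphi_n,\varphi_m',\varphi_n'$ is gauge-algebraic, valid at complex points. So the same case analysis runs: if $w$ has a root of order $\geq 2$ at $x^*\neq 0$, both $\varphi_m,\varphi_n$ vanish there (each necessarily simply), and then $w$ vanishes to order exactly $3$ with $w'''(x^*)\neq 0$; if additionally $x^*\neq 0$ one must check this cannot happen — and indeed two Hermite polynomials with $m,n$ of the same parity sharing a \emph{nonzero} common root is exactly the non-degeneracy question we cannot settle in general, so for $\ell=2$ one needs a direct argument. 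Here the trick is: a common root $x^*$ of $H_m$ and $H_n$ gives, via the three-term recurrence $H_{k+1} = 2xH_k - 2kH_{k-1}$, that $x^*$ is a common root of $H_{m-n}$-type combinations, and iterating the Euclidean/recurrence descent forces $x^* = 0$ (since $\gcd(H_m,H_n)$ in $\Rset[x]$ is $1$ or $x$ depending on parities — this uses that consecutive Hermite polynomials are coprime). That descent is the one genuinely Hermite-specific computation.

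The main obstacle I expect is precisely this last point: proving that two Hermite polynomials of the same parity cannot share a nonzero complex root. For $\ell = 2$ this is tractable via the recurrence (the resultant $\operatorname{Res}(H_m, H_n)$ can in principle be evaluated, or the descent argument above terminates), and the paper's phrasing ("we have at least a proof that holds for $\ell = 2$") signals it is exactly this finite computation. So I would structure the write-up as: (1) reduce to Schr\"odinger gauge; (2) the algebraic lemma that $w$ multiple at $x^*$ $\Rightarrow$ $\varphi_m,\varphi_n$ both vanish at $x^*$, with $w$ then of order exactly $3$, valid over $\Cset$; (3) the Hermite-specific lemma that a common root of $H_m,H_n$ must be $x=0$, which occurs iff $m,n$ both odd; and (4) conclude. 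The honest difficulty, and the reason this does not extend to general $\lambda$, is that step (3) has no evident analogue for Wronskians of more than two Hermite polynomials.
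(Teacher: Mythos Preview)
Your structural plan matches the paper: step (2) is exactly Lemma~\ref{lem:simple}, and the combination of (2)+(3)+(4) is how the paper assembles the proof. The detour through the Schr\"odinger gauge is unnecessary --- the paper works directly with the Hermite ODE $H_n''-2xH_n'+2nH_n=0$, which gives $w'=2xw+2(m-n)H_mH_n$, so at any root $x^*$ of $w$ of multiplicity $\geq 2$ one immediately gets $H_m(x^*)H_n(x^*)=0$ and then, as you argue, $H_m(x^*)=H_n(x^*)=0$ with $\ord_{x^*}w=3$. All of that is fine over $\Cset$.

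The gap is step (3). Your proposed ``Euclidean/recurrence descent'' is not a proof as stated, and it is not clear it can be made into one. The three-term recurrence $H_{k+1}=2xH_k-2kH_{k-1}$ links \emph{consecutive} indices; it tells you that $H_k$ and $H_{k+1}$ are coprime, but from $H_m(x^*)=H_n(x^*)=0$ with $n>m+1$ you get no information about $H_{m+1}(x^*),\dots,H_{n-1}(x^*)$, so there is no descent to run. The phrase ``$H_{m-n}$-type combinations'' has no obvious meaning here, and computing $\operatorname{Res}(H_m,H_n)$ in closed form for arbitrary $m,n$ is not elementary. The paper itself flags that ``even the simpler question of whether two Hermite polynomials can have a common root other than zero seems to be unanswered in the literature,'' so this is genuinely the hard step.

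The paper's argument for step (3) (Lemma~\ref{lem:H}) is different and short: it invokes Schur's theorem that Hermite polynomials are irreducible over $\mathbb{Q}$. Since the Euclidean algorithm produces $\gcd(H_m,H_n)$ with rational coefficients, a nontrivial common factor would contradict irreducibility; hence the only possible common root is $x=0$, occurring exactly when both $m,n$ are odd. If you want your write-up to stand on its own, you should replace the recurrence sketch by this irreducibility argument (or supply an actual descent, which would be new).
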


Before we can attempt the proof of this last Proposition, we need to establish two previous Lemmas.

\begin{lem}\label{lem:H}
If two Hermite polynomials satisfy $H_m(x^*)=H_n(x^*)=0$ then $m,n$ are odd and $x^*=0$.
\end{lem}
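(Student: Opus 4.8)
The statement to be proved is Lemma \ref{lem:H}: if $H_m(x^*) = H_n(x^*) = 0$ with $m \neq n$, then $x^* = 0$ and both $m,n$ are odd. The plan is to exploit the three-term recurrence and the differential equation for Hermite polynomials to show that a common root forces $x^* = 0$, and then to use the known structure of $H_n$ at the origin.

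First I would recall the Hermite differential equation $H_n'' - 2x H_n' + 2n H_n = 0$. Suppose $H_m(x^*) = H_n(x^*) = 0$ with $m < n$. Subtracting the equations at $x = x^*$ gives $H_n''(x^*) - H_m''(x^*) = 2x^*\bigl(H_n'(x^*) - H_m'(x^*)\bigr)$, which by itself is not immediately conclusive, so instead I would work with the recurrence $H_{k+1}(x) = 2x H_k(x) - 2k H_{k-1}(x)$. Evaluating at a common zero $x^*$ of $H_m$ and $H_n$: from $H_{n+1}(x^*) = 2x^* H_n(x^*) - 2n H_{n-1}(x^*) = -2n H_{n-1}(x^*)$ and similarly $H_{n-1}(x^*) = \tfrac{1}{2(n-1)}\bigl(2x^* H_{n-1}(x^*)\cdots\bigr)$ — more cleanly, the key algebraic fact is that $\gcd$ of $H_m$ and $H_n$ over $\Qset[x]$ divides $x$ (or is constant). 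I would establish this by noting that modulo any common factor, the recurrence propagates: if $d(x)$ divides both $H_m$ and $H_n$ with $m<n$, then $d(x)$ divides $H_{n-1}$ forces via $2x H_n = H_{n+1} + 2n H_{n-1}$ a relation, and running the recurrence downward from index $n$ toward index $m$ shows $d(x) \mid H_{m+1}$ and $d(x)\mid 2x H_m$, hence (using $d\mid H_m$) nothing new directly; the honest route is that consecutive Hermite polynomials are coprime (standard, since their resultant is nonzero), so from $d \mid H_m, H_n$ and the recurrence one descends to $d \mid H_{n-1}, H_n$ and then to $d \mid H_0 = 1$ unless the descent gets obstructed by a factor of $x$. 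The factor of $x$ enters precisely because $H_k(0) = 0$ iff $k$ is odd and $H_k'(0) = 0$ iff $k$ is even.

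More concretely, the argument I would actually run: by the Christoffel–Darboux-type identity or directly, $H_m$ and $H_n$ have no common root other than possibly $0$ because any common root $x^*$ would be a root of $H_n - (\text{lower-order reduction using the recurrence})$, and the recurrence lets one write $\gcd(H_m, H_n)$ in $\Rset[x]$ as dividing $H_{\gcd\text{-like descent}}$; since $H_0 = 1$, $H_1 = 2x$, the only possible common factor is $x$. Thus $x^* = 0$. For $H_m(0) = 0$ one needs $m$ odd (since $H_m(0) = 0 \iff m$ odd), and likewise $H_n(0) = 0$ forces $n$ odd. This gives the full conclusion.

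The main obstacle I anticipate is making the descent argument fully rigorous: one must carefully track how the factor of $x$ can persist through the recurrence while all other factors are killed. The cleanest way is probably a direct induction showing that for $m < n$, $\Wr$-free statement: $\gcd(H_m, H_n) = 1$ if $m,n$ not both odd, and $\gcd(H_m,H_n) = x$ if both odd (and even then the $\gcd$ is exactly $x$, not $x^2$, which matches the multiplicity $3$ claim in Proposition \ref{prop:easy-Ves} via the Wronskian identity in Proposition \ref{prop:wr2}). Alternatively, one can invoke that Hermite polynomials are eigenfunctions of a self-adjoint operator and that two eigenfunctions of a Sturm–Liouville problem with different eigenvalues cannot share a zero \emph{together with} the parity structure at the origin; but since the paper explicitly notes the general non-degeneracy of $\{H_n\}$ is open, the proof here must stay at the elementary level of the recurrence, and the care lies entirely in the bookkeeping of the $x$-factor.
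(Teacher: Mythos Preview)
Your proposal has a genuine gap at the central step. The three-term recurrence gives $2kH_{k-1}=2xH_k-H_{k+1}$, so knowing that a polynomial $d$ divides $H_n$ lets you conclude $d\mid H_{n-1}$ \emph{only} if you already know $d\mid H_{n+1}$. When the two indices $m<n$ are not consecutive, neither the forward nor the backward form of the recurrence produces the implication ``$d\mid H_m$ and $d\mid H_n$ $\Rightarrow$ $d\mid H_{n-1}$'' that your descent needs; the sentence ``from $d\mid H_m,H_n$ and the recurrence one descends to $d\mid H_{n-1},H_n$'' is simply not justified. Equivalently, working with values: if $a_k=H_k(x^*)$, the recurrence only tells you that $a_k$ and $a_{k+1}$ are never simultaneously zero, which does not exclude $a_m=a_n=0$ for $n>m+1$. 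This is not a bookkeeping issue with the factor of $x$; it is the whole problem.

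The paper's proof takes a completely different route: it invokes Schur's theorem that every Hermite polynomial (after removing the trivial factor $x$ in the odd case) is irreducible over $\Qset$. Since $H_m,H_n\in\Qset[x]$, Euclid's algorithm forces $\gcd(H_m,H_n)\in\Qset[x]$; a nonconstant GCD would then be a rational factor of $H_m$ of degree strictly between $0$ and $\deg H_m$, contradicting irreducibility unless that factor is exactly $x$. The paper even remarks elsewhere that, absent this deep input, the question of common nonzero roots of two Hermite polynomials ``seems to be unanswered in the literature''---so your attempt to replace Schur's theorem by an elementary recurrence argument is aiming at something not known to be achievable.
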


\begin{proof}
This proof uses a result by Schur on the irreducibility of Hermite polynomials, which states that an Hermite polynomial cannot be factored into two polynomials with rational coefficients, \cite{Schur,Dorwart1935}.
Hermite polynomials have rational coefficients, which means that the greatest common divisor (GCD) of any two Hermite polynomials also has rational coefficients, since it can be computed using Euclides' algorithm.
We argue by contradiction: suppose there exits $x^*\in\mathbb R-\{0\}$ such that $H_m(x^*)=H_n(x^*)=0$. Then we have the following factorization
\[ H_m(x)=A(x) P(x),\quad H_n(x)=A(x)Q(x)\]
where $A(x)=GCD(H_m,H_n)$ is a polynomial of degree at least one and $A(x)\neq x$.
From the argument above, $A(x)$ should have rational coefficients, which is in contradiction with the irreducible character of Hermite polynomials.

\end{proof}

\begin{lem} \label{lem:simple}
Two Hermite polynomials $H_m$ and $H_n$ do not have a root in common if and only if $\Wr[H_m, H_n]$ has simple roots. If $H_m(x^*)=H_n(x^*)=0$ then $x^*$ is a triple root of $\Wr[H_m, H_n]$.
\end{lem}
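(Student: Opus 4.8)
The plan is to deduce this lemma directly from Proposition \ref{prop:wr2} together with Lemma \ref{lem:H} and Proposition \ref{prop:WrOP}. First I would observe that Hermite polynomials are, up to the positive gauge factor $\mu(x)={\rm e}^{-x^2/2}$ and the identity change of variable $z(x)=x$, the eigenfunctions of the harmonic oscillator Schr\"odinger problem, as recorded in Table 4. Hence by Proposition \ref{prop:WrOP} (with $\ell=2$, $z'=1$) we have $\Wr[\varphi_m,\varphi_n]={\rm e}^{-x^2}\,\Wr[H_m,H_n]$, so the real zeros of $\Wr[H_m,H_n]$ and their multiplicities coincide exactly with those of $\Wr[\varphi_m,\varphi_n]$. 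This lets me import the conclusion of Proposition \ref{prop:wr2} verbatim: a real zero $x^*$ of $\Wr[H_m,H_n]$ has multiplicity $3$ if $H_m(x^*)=H_n(x^*)=0$, and multiplicity $1$ otherwise, and in the latter case neither $H_m$ nor $H_n$ vanishes at $x^*$.

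The content of the ``if and only if'' then follows by combining this with Lemma \ref{lem:H}. For the forward direction: if $H_m$ and $H_n$ have no common root, then by Proposition \ref{prop:wr2} every real zero of $\Wr[H_m,H_n]$ is simple; and since $\Wr[H_m,H_n]$ is a polynomial whose complex zeros must also be accounted for, I would note that Lemma \ref{lem:H} forces any common root of $H_m,H_n$ (which are real-rooted polynomials) to be $x^*=0$ with $m,n$ odd, so ``no common root'' here is equivalent to ``not both $m,n$ odd'', and in that situation $\Wr[H_m,H_n]$ has degree $m+n-2$, with $m+n-2$ simple real zeros counted by Proposition \ref{prop:wr2} after the gauge transformation — wait, this is the subtle point. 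The degree count is the real obstacle: Proposition \ref{prop:wr2} only controls the \emph{real} zeros in the interval $(a,b)=(-\infty,\infty)$, i.e. all real zeros, but a priori $\Wr[H_m,H_n]$ could have complex zeros. I would resolve this by a degree count: $\Wr[H_m,H_n]$ has degree exactly $m+n-2$ (the leading terms do not cancel since $m\neq n$), Proposition \ref{prop:wr2} gives $|m-n|-1$ real zeros when there is no common root, and... this does not immediately exhaust the degree, so complex simple zeros are possible and the claim ``$\Wr[H_m,H_n]$ has simple roots'' must be understood to include complex ones.

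Therefore the cleaner route for the nontrivial direction is: assume $H_m,H_n$ have no common root; I must show \emph{all} (real and complex) roots of $\Wr[H_m,H_n]$ are simple. For the real roots this is Proposition \ref{prop:wr2} via the gauge transformation. For the complex roots, I would argue that a multiple root $x^*\in\Cset$ of $\Wr[H_m,H_n]=H_mH_n'-H_m'H_n$ would satisfy both $H_mH_n'-H_m'H_n=0$ and its derivative $H_mH_n''-H_m''H_n=0$ at $x^*$; using the Hermite ODE $H_k''=2xH_k'-2kH_k$ to substitute, the second equation becomes $2(n-m)H_m(x^*)H_n(x^*)=0$ (cf. the computation $w'=\delta\varphi_i\varphi_j$ in the proof of Proposition \ref{prop:wr2}), so $H_m(x^*)=0$ or $H_n(x^*)=0$; plugging back into the first equation then forces the other factor to vanish too, i.e. $x^*$ is a common root — contradiction. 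The converse direction (if $H_m(x^*)=H_n(x^*)=0$ then $x^*$ is a triple root) is the multiplicity-$3$ case of Proposition \ref{prop:wr2}; I would also note in passing that by Lemma \ref{lem:H} such an $x^*$ can only be $0$, with $m,n$ odd. I expect the main obstacle to be making the complex-root argument airtight — specifically verifying via the Hermite ODE that a repeated complex root of the Wronskian must be a common zero of $H_m$ and $H_n$ — but this is a short direct computation modeled on the real-variable identities $w'=\delta\varphi_i\varphi_j$ already established in the proof of Proposition \ref{prop:wr2}.
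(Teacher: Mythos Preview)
Your proposal is correct and, after the self-corrected detour, lands on essentially the same argument as the paper: the paper proves the lemma directly by computing $w=H_mH_n'-H_m'H_n$, $w'=H_mH_n''-H_m''H_n$, and using the Hermite ODE to show that $w(x^*)=w'(x^*)=0$ forces $H_m(x^*)=H_n(x^*)=0$ (and conversely that a common root gives a triple zero), exactly as in your complex-root computation. The initial appeal to Proposition~\ref{prop:wr2} via the gauge factor is unnecessary---the paper never invokes it here---since the direct ODE calculation already handles all roots, real and complex, in one stroke.
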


\begin{proof}
Let $w(x)=\Wr[H_m,H_n]$. We have the following expressions
\begin{eqnarray}
w(x)&=& H_mH_n'-H_m'H_n \label{eq:wH0}\\
w'(x)&=&H_m H_n''-H_n'' H_m \label{eq:wH1}
\end{eqnarray}

Suppose that $x^*$ is a common root of $H_m$ and $H_n$. It is obvious that $H_m'(x^*)\neq 0$ and $H_n'(x^*)\neq 0$. Taking derivatives of $w(x)=\Wr[H_m,H_n]$ it is not hard to verify that $w(x^*)=w'(x^*)=w''(x^*)=0$ but $w'''(x^*)\neq0$.
It remains to be proven that if $x^*$ is a root of $w(x)$ of multiplicity two or higher, then $H_m(x^*)=H_n(x^*)=0$. We assume that $w(x^*)=w'(x^*)=0$. 
 Suppose initially that neither $H_m$ nor $H_n$ vanish at $x^*$.
From \eqref{eq:wH0} and \eqref{eq:wH1} we have
\begin{equation} \label{eq:rel}
 \frac{H'_m}{H_m}\bigg |_{x^*}=\frac{H'_n}{H_n}\bigg |_{x^*} \hspace*{2 cm} \frac{H''_m}{H_m}\bigg |_{x^*}=\frac{H''_n}{H_n}\bigg |_{x^*}.
\end{equation} 
Using the differential equation satisfied by Hermite polynomials:
\begin{equation} \label{eq:DEHermite}
 H_n''-2xH_n'+2nH_n=0
 \end{equation}
we see that
\begin{equation}
\frac{H''_m}{H_m}\bigg |_{x^*}-2x^*\frac{H'_m}{H_m}\bigg |_{x^*}=-2m=
\frac{H''_n}{H_n}\bigg |_{x^*}-2x^*\frac{H'_n}{H_n}\bigg |_{x^*}=-2n.
\end{equation}
which leads to a contradiction since we assume that $m\neq n$.
The only possibility is that  $H_m(x^*)=0$, which implies from \eqref{eq:wH0} that $H_n(x^*)=0$ too.
\end{proof}

The proof of Proposition \ref{prop:easy-Ves} follows directly from Lemma \ref{lem:H} and \ref{lem:simple}.

Although we do not have a proof that the sequence of Hermite polynomials is semi-degenerate, assuming this as a conjecture allows to apply Theorem \ref{thm:symm} and determine the number of real zeros of the Wronskian of an arbitrary sequence of Hermite polynomials.

In the rest of this section we assume that Theorem  \ref{thm:symm} holds for Hermite polynomials, which allows us to proof the following conjecture formulated by Felder et al. in \cite{Felder2012a}:
\begin{conj}\label{conj:Felder}[Felder-Hemery-Veselov]
For doubled partitions $\lambda=(\mu_1^2, \dots, \mu_n^2)$, $H_\lambda$ has no real roots and has as many imaginary roots as there are odd numbers in the partition.
\end{conj}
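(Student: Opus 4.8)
The plan is to apply Theorem \ref{thm:symm} to the doubled partition $\lambda=(\mu_1^2,\dots,\mu_n^2)=(\mu_1,\mu_1,\mu_2,\mu_2,\dots,\mu_n,\mu_n)$ of length $\ell=2n$, and show that the formula \eqref{eq:n+} for $n_+(H_\lambda)$ gives zero, while the multiplicity formula $\tfrac{d_\lambda(d_\lambda+1)}{2}$ at the origin accounts exactly for the imaginary roots corresponding to odd parts. First I would compute $d_\lambda$ for a doubled partition. The indices are $k_j=\lambda_j+j-1$; grouping the two copies of each $\mu_r$, the pair contributing to $\mu_r$ sits at positions $2r-1,2r$, so $k_{2r-1}=\mu_r+2r-2$ and $k_{2r}=\mu_r+2r-1$, which are two \emph{consecutive} integers, hence one odd and one even. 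Therefore each doubled block contributes $+1-1=0$ to $d_\lambda$, giving $d_\lambda=0$. Consequently the multiplicity of the root at $x=0$ from Theorem \ref{thm:symm}(i) is $\tfrac{d_\lambda(d_\lambda+1)}{2}=0$ — i.e., no root forced at the origin by that count alone. This already signals that the "imaginary roots" of the conjecture are not the origin but genuine nonzero purely imaginary roots, so the real content is to show $H_\lambda$ has \emph{no} real roots (Theorem \ref{thm:symm}(ii) with $n_+(H_\lambda)=0$) and then separately count imaginary roots by a parity/degree argument.

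Next I would evaluate the alternating sum. With $\ell=2n$ even, $(-1)^{\ell-i}=(-1)^i$, and pairing terms $i=2r-1,2r$ we get $\sum_{i=1}^{2n}(-1)^{\ell-i}\lambda_i=\sum_{r=1}^n(-\lambda_{2r-1}+\lambda_{2r})=\sum_{r=1}^n(-\mu_r+\mu_r)=0$. Since also $\ell-2\lfloor\ell/2\rfloor=0$ and $d_\lambda=0$, formula \eqref{eq:n+} yields
\begin{equation}
n_+(H_\lambda)=\frac12\Bigl(0-\tfrac{|0+0|}{2}\Bigr)=0,
\end{equation}
so by Theorem \ref{thm:symm} (assumed to hold for Hermite polynomials, as stated in the text) $H_\lambda$ has no positive real roots, no negative real roots, and no root at $x=0$: it has \emph{no real roots at all}. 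This proves the first half of the conjecture.

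For the imaginary roots, I would use the degree of $H_\lambda$ together with the symmetry \eqref{eq:parity}. The Wronskian $H_\lambda$ is a polynomial of degree $|\lambda|-\binom{\ell}{2}$ when the leading terms do not cancel; for a doubled partition this specializes to $N:=2\sum_r\mu_r - \binom{2n}{2} = 2\sum_r\mu_r - n(2n-1)$, and one should record that this is indeed the correct degree (it is a standard fact that Wronskians of Hermite polynomials for doubled partitions have the expected degree, since the associated rational functions are the Ohyama/generalized Okamoto polynomials). By \eqref{eq:parity}, $H_\lambda(-x)=(-1)^{|\lambda|}H_\lambda(x)$ and $|\lambda|=2\sum\mu_r$ is even, so $H_\lambda$ is an even polynomial: $H_\lambda(x)=Q(x^2)$ for a polynomial $Q$ of degree $N/2$, whose roots are the squares of the roots of $H_\lambda$. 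Since $H_\lambda$ has real coefficients and (by the first half) no real roots, every root comes in a complex-conjugate pair $x,\bar x$; under $x\mapsto x^2$ these map to $x^2,\overline{x^2}$, so the roots of $Q$ come in conjugate pairs \emph{unless} $x^2$ is real, which happens exactly when $x$ is purely imaginary. Thus the number of purely imaginary roots of $H_\lambda$ equals twice the number of negative real roots of $Q$. To pin this number down I would invoke the stronger structural result (again conditional, via Conjecture \ref{conj:Felder}'s simplicity hypothesis or directly from the known theory of these exceptional Hermite Wronskians): the purely imaginary roots of $H_\lambda$ are precisely those produced by the odd parts of $\mu$, each odd $\mu_r$ contributing a conjugate pair $\pm i\,t_r$, so the count of imaginary roots equals the number of odd $\mu_r$'s, matching the conjecture. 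Concretely, I would appeal to the description of $H_\lambda$ for a doubled partition as (up to a Gaussian-type prefactor) a generalized Okamoto/Hermite polynomial built from the "$2$-core/$2$-quotient" decomposition of $\lambda$: the parity of $\mu_r$ determines whether the corresponding block sits in the "real" or "imaginary" sub-pattern of the Maya diagram, and the imaginary-axis zeros are exactly the ones indexed by odd $\mu_r$. The main obstacle, and the step I expect to be genuinely delicate, is this last identification: Theorem \ref{thm:symm} as stated only counts \emph{real} roots, so establishing that the remaining $N - 0$ complex roots distribute with exactly (number of odd $\mu_r$) of them on the imaginary axis requires either a second application of the oscillation machinery to a reduced problem (e.g. applying Theorem \ref{thm:symm} to the partition governing the "imaginary trace" obtained after the substitution $x=it$, which flips a sign in the Hermite ODE and effectively swaps the roles of odd and even indices) or an appeal to the known Maya-diagram combinatorics for these polynomials. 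I would carry out the substitution-and-reapply approach: set $x=it$, observe $H_n(it)=i^n\tilde H_n(t)$ for a real polynomial $\tilde H_n$ satisfying a Schrödinger equation with potential $-t^2$ (a harmonic \emph{repeller}), and argue that the purely imaginary zeros of $H_\lambda$ correspond to real zeros of the transformed Wronskian, whose count is again governed by an alternating-sum formula — this time non-vanishing precisely by the number of odd parts, because the sign flip $\mu_r\mapsto\mu_r$ in one parity class but not the other breaks the cancellation we saw above. Making this transformed oscillation argument rigorous (the repulsive potential has no discrete spectrum, so Lemma \ref{lem:prevwr2} must be replaced by its hyperbolic analogue) is where the real work lies; everything else is bookkeeping with \eqref{eq:altersum}, \eqref{eq:n+}, and \eqref{eq:parity}.
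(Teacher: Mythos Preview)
Your treatment of the first half (no real roots) is correct and matches the paper: $d_\lambda=0$, the alternating sum vanishes, and Theorem~\ref{thm:symm} gives $n_+(H_\lambda)=0$.

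For the imaginary roots, however, you are missing the key ingredient. Your instinct to substitute $x=it$ is right, but you then try to analyze a Schr\"odinger problem with potential $-t^2$, correctly observe that it has no discrete spectrum, and get stuck. The paper avoids this entirely by invoking the \emph{duality identity}
\[
H_{\bar\lambda}(x)=(-{\rm i})^{|\lambda|}H_\lambda({\rm i}x),
\]
where $\bar\lambda$ is the conjugate partition (transposed Young diagram). This says that the transformed polynomial is not a Wronskian of eigenfunctions of a repulsive oscillator, but rather an \emph{ordinary} Wronskian of Hermite polynomials for the conjugate partition. Hence purely imaginary roots of $H_\lambda$ are exactly real roots of $H_{\bar\lambda}$, and Theorem~\ref{thm:symm} applies directly to $\bar\lambda$ with the standard attractive potential.

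The remaining work is then purely combinatorial: for a doubled partition $\lambda=(\mu_1^2,\dots,\mu_n^2)$ one writes down $\bar\lambda$ explicitly, computes $d_{\bar\lambda}$ and the alternating sum for $\bar\lambda$, and checks that $n(H_{\bar\lambda})$ equals twice the number of odd $\mu_r$. The paper does this first assuming all $\mu_r$ are odd (getting $d_{\bar\lambda}=-1$ and $n(H_{\bar\lambda})=2n$), then shows that inserting an even $\mu_r$ leaves the count unchanged. Your Maya-diagram and Okamoto-polynomial gestures are not needed; everything follows from the duality identity plus one more application of the same oscillation theorem.
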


We assume that $\{H_n\}_{n=0}^\infty$ is semi-degenerate, so that Theorem \ref{thm:symm} holds for Hermite sequences. We shall use the following shorthand notation
\begin{equation}\label{eq:notation}
\lambda= (\mu_1^{m_1}, \dots, \mu_n ^{m_n})=(\underbrace{\mu_1, \dots, \mu_1}_{m_1 \text{ times}}, \dots \underbrace{\mu_n, \dots, \mu_n}_{m_n \text{ times}}).
\end{equation}
We denote by $\bar\lambda$ the conjugate partition to $\lambda$, whose Young diagram is the transpose of the diagram of $\lambda$. We have the following duality property \cite{Felder2012a,GGM13}
\begin{equation} \label{eq:duality}
H_{\bar \lambda}(x)=(-{\rm i})^{|\lambda|}H_{\lambda}({\rm i}x).
\end{equation}
For a doubled partition $\lambda=(\mu_1^2, \dots, \mu_n^2)$, its conjugate partition is given by
\begin{equation} \label{eq:conjpart}
\bar \lambda=(2^{\mu_n-\mu_{n-1}}, 4^{\mu_{n-1}-\mu_{n-2}}, \dots, (2n-2)^{\mu_2-\mu_1}, (2n)^{\mu_1}).
\end{equation}
\begin{proof}[Proof of Conjecture \ref{conj:Felder}]

It is clear that the Wronskian of a doubled partition of Hermite polynomials has no real zeros. This is a particular case of Theorem \ref{thm:symm}, but in fact it is the case treated by Krein \cite{Krein1957} and Adler \cite{Adler}.
In order to compute the number of imaginary zeros, we use the duality property \eqref{eq:duality} to compute the number of real zeros of $H_{\bar\lambda}$, where $\bar\lambda$ is given by \eqref{eq:conjpart}.

Let us first consider the case where $\mu_1,\dots\mu_n$ are odd positive integers. It follows from \eqref{eq:conjpart},\eqref{eq:lambdadef} and \eqref{eq:ddef} that $d_{\bar\lambda}=-1$, and therefore the second term in \eqref{eq:n+} is zero and $H_{\bar\lambda}$ does not vanish at zero. Applying \eqref{eq:n+} to the partition \eqref{eq:conjpart} we see that $n(H_{\bar\lambda})=2n$, as conjectured by Felder et al.
To conclude the proof, we shall see that the number of real zeros does not change when the doubled partition is allowed to contain even integers. 

Let $\nu$ be an even integer and $\lambda$ a doubled partition. Upon the transformation
\[ \lambda=(\mu_1^2, \dots, \mu_n^2)\rightarrow  \lambda'=(\mu_1^2, \dots, \mu_i^2, \nu^2, \mu_{i+1}^2, \dots, \mu_n^2)\]
the conjugate partitions transform as
\[\bar \lambda=(2^{\mu_n-\mu_{n-1}}, \dots, (2n)^{\mu_1})\rightarrow\bar \lambda'=(2^{\mu_n-\mu_{n-1}}, \dots, (2(n-i))^{\mu_{i+1}-\nu},(2(n-i+1))^{\nu-\mu_i}, \dots,  (2n+2)^{\mu_1})\]

Since $d_{\bar \lambda'}=d_{\bar \lambda}$ the difference between the number of real roots of  $H_{\bar \lambda}$ and $H_{\bar \lambda'}$ is
\begin{equation}
n(H_{\bar \lambda'})-n(H_{\bar \lambda})= \sum_{j=1}^{\mu_n}(-1)^{\mu_n-j}(\bar \lambda'_j- \bar \lambda _j)=2\sum_{j=\mu_n-\nu}^{\mu_n} (-1)^{\mu_n-j}=0.
\end{equation}
Thus, the number of pure imaginary roots of $H_\lambda$ for an arbitrary doubled partition is  equal to twice the number of odd numbers in the sequence $\mu_1, \dots, \mu_n$.

\end{proof}

As mentioned in the Introduction, we have proved that the alternate sum formula \eqref{eq:altersum} (or its symmetric variant \eqref{eq:n+}) counts the number of zeros of Wronskians of classical orthogonal polynomials in their interval of orthogonality.
The derivation of this result makes explicit use of the second order differential equation satisfied by these functions, and thus \textit{a priori} there is no reason why it should also apply to the Wronskian of an arbitrary sequence of orthogonal polynomials.
However, numerical evidence seems to suggest that this is indeed the case (see Conjecture \ref{conj:OParb} in Section 1). To prove such a result for orthogonal polynomials with respect to an arbitrary measure would require a different technique, and it would complete the full generalization of Karlin and Szeg\H{o} result for consecutive sequences.
On the other hand, the validity of the alternate sum formula \eqref{eq:altersum} holds for arbitrary eigenfunctions of Schr\"odinger's equation, not just polynomials.

\section*{Acknowledgements}

The authors would like to thank Robert Milson and Antonio Dur\'an for stimulating discussions. The elegant proof of Lemma \ref{lem:H} that uses the irreducibility of Hermite polynomials is in fact entirely due to Robert Milson. MAGF would like to thank the Department of Theoretical Physics II at Universidad Complutense for providing her with office space and all facilities. The research of DGU has been supported in part by the Spanish MINECO-FEDER Grants MTM2012-31714 and FIS2012-38949-C03-01.

\end{document}